\DeclareMathOperator*{\argmin}{\mathbf{argmin}}
\numberwithin{theorem}{section}
\newcommand{\TheTitle}{Non-convex optimization for 3D point source localization using a rotating point spread function}
\newcommand{\TheAuthors}{C. Wang, R. Chan, M. Nikolova, R. Plemmmons and S. Prasad}
\newcommand{\ShortTitle}{non-convex optimization for 3D localization}
\headers{\ShortTitle}{\TheAuthors}
\title{{\TheTitle}}
\title{{\TheTitle}\thanks{Dedicated to the memory of Professor Mila Nikolova who passed away on June 20th, 2018. 
Submitted to the editors on April 1st, 2018. Revision submitted on September 27th, 2018.
\funding{The research of the first, fourth and fifth author were supported by the US Air Force Office of Scientific Research under grant FA9550-15-1-0286. The work of the second author was supported by HKRGC Grants No. CUHK14306316, HKRGC CRF Grant C1007-15G, HKRGC AoE Grant AoE/M-05/12, CUHK DAG No. 4053211, and CUHK FIS Grant No. 1907303. The work of the third author was partially funded by the French Research Agency (ANR)
under grant No ANR-14-CE27-001 (MIRIAM)
and by the Isaac Newton Institute for Mathematical Sciences for
support and hospitality during the programme Variational Methods and Effective Algorithms for Imaging and Vision,
EPSRC  grant  no  EP/K032208/1.
The work of the fourth author was also supported by HKRGC Grants No. CUHK14306316.  }}}
\author{Chao Wang\thanks{Department of Mathematics, The Chinese University of Hong Kong,
Shatin, Hong Kong (\email{chaowang.hk@gmail.com} and \email{rchan@math.cuhk.edu.hk}). }
\and
Raymond Chan\footnotemark[2]
\and Mila Nikolova\thanks{CMLA, CNRS, ENS Cachan, Universit\'{e} Paris-Saclay,
94235 Cachan Cedex, France (deceased).
}
\and Robert Plemmons\thanks{Departments of Computer Science and Mathematics, Wake Forest University, Winston-Salem, NC 27109, USA (\email{plemmons@wfu.edu}). }
\and Sudhakar Prasad\thanks{Department of Physics and Astronomy, The University of New Mexico, Albuquerque, NM 87131, USA  (\email{sprasad@unm.edu}).  }}
\begin{document}

\maketitle

\begin{abstract}
 We consider the high-resolution imaging problem of 3D point source image recovery from 2D data using a method based on point spread function (PSF) engineering. The method involves  a new technique, recently proposed by S.~Prasad, based on the use of a rotating PSF with a single lobe to obtain depth from defocus.   The amount of rotation of the PSF encodes the depth position of the point source. Applications include high-resolution single molecule localization microscopy as well as the problem addressed in this paper on localization of space debris using a space-based telescope. The localization problem is discretized on a cubical lattice where the coordinates of nonzero entries represent the 3D locations and the values of these entries the fluxes of the point sources. Finding the locations and fluxes of the point sources is a large-scale sparse 3D inverse problem. A new non-convex regularization method with a data-fitting term based on Kullback-Leibler (KL) divergence is proposed for 3D localization for the Poisson noise model. In addition, we propose a new scheme of estimation of the source fluxes from the KL data-fitting term. Numerical experiments illustrate the efficiency and stability of the algorithms  that are trained on a random subset of image data before being applied to other images. Our 3D localization algorithms can be readily applied to other kinds of depth-encoding PSFs as well.
 \end{abstract}

\begin{keywords}
   Non-convex optimization algorithms, 3D localization, space debris,  point spread function, image rotation, image processing
\end{keywords}
\begin{AMS}
65K10, 	65F22,	90C26
\end{AMS}

\section{Introduction} \label{sec:introduction}
Imaging and localizing point sources with high accuracy in a 3D volume is an important and challenging task. An example is super-resolution 3D single-molecule localization \cite{Book_SR_micro2017,3dsml2017review}, which is an area of intense interest in biology (cell imaging, folding, membrane behavior, etc.), in chemistry (spectral diffusion, molecular distortions, etc.), and in physics (structures of materials, quantum optics, etc.).
A different problem of interest in the space-surveillance community is that of localization of space debris using a space-based telescope.  Since the optical wavelength is much shorter than the radio wavelength, optical detection and localization is expected to attain far greater precision than the more commonly employed radar systems. However, the shorter field depth of optical imaging systems may limit their performance to a shorter range of distances. An integrated system consisting of a radar system for performing radio detection, localization, and ranging of space debris at larger distances, which cues in an optical system when debris reach shorter distances, may ultimately provide optimal performance for detecting and tracking debris at distances ranging from tens of kilometers down to hundreds of meters.

A stand-alone optical system based on the use of a light-sheet illumination and scattering concept \cite{englert2014optical}
for spotting debris within meters of a spacecraft has been proposed.
A second system can localize all three coordinates of an unresolved, scattering debris \cite{hampf2015optical,wagner2016detection} by
utilizing either parallex between two observatories or a pulsed laser ranging system or a hybrid system. For parallex, two observatories receive debris scattered optical signal simultaneously. For the pulsed laser, the ranging system is coupled to a single imaging observatory. The hybrid system utilizes both approaches in which the laser pulse transmitted from one of the two observatories is received at time-gated single-photon detectors with good parallax information at both the observatories.
 However, to the best of our knowledge there is no other proposal for a full 3D debris localization and tracking optical or optical-radar system working in the range of tens to hundreds of meters.
 Prasad \cite{Internal_report2016prasad} has proposed
the use of an optical imager that exploits off-center image rotation to encode in a single image snapshot both the range $z$ and transverse ($x,y$) coordinates of a swarm of unresolved sources such as
small, sub-centimeter class space debris, which when actively illuminated can scatter a fraction of laser irradiance back into the imaging sensor.

Here we develop a promising non-convex optimization algorithm and compare its performance with other recent algorithms that can reconstruct the 3D positions and fluxes of a random collection of many point sources within the focal depth of a rotating-PSF-based imaging system from a single image dataset. We shall assume the data to be corrupted only by signal-dependent shot noise that is well described by a Poisson statistical model, as would be roughly characteristic of signal acquisition by an EM-CCD sensor in the photon counting (PC) mode \cite{Daigle2010}.

\subsection{Previous related work}
The area of 3D object localization and imaging is getting increased attention in recent years. One method is to scan the 2D slices of  information at different depths and then reconstruct the 3D image \cite{scan2D2004simultaneous}. Owing to the inefficiency and other limitations of standard PSFs,
several optical modifications \cite{Book_SR_micro2017,3dsml2017review} have been developed for these problems. There are three typical classes of methods: multifocus method, interferometric detection and point spread function (PSF) engineering. The multifocus method \cite{Biplane2008juette,prabhat2004simultaneous} uses more than one focal plane simultaneously. The interferometric method \cite{shtengel2009interferometric} extracts the locations of point sources from the interference fringes of a source signal coherently propagating in two opposed paths in the imaging instrument. PSF engineering \cite{lew2011corkscrew,DH2008pavani,DH2009pavani,prasad2013rotating,tetrapods2014shechtman,tetrapods2015shechtman}, which encodes the depth coordinate of point sources in a 3D scene into a single 2D snapshot, is based on choosing a phase pattern that makes the defocused image of a point source depth-dependent without blurring it excessively.  For example, this can be achieved by inserting a cylindrical lens into an optical system to get approximately elliptical Gaussian, astigmatic images whose axial orientation and size contain unique information about the source locations relative to the plane of best focus \cite{huang2008cylendricalstorm}.
A more general way to create the requisite phase aberrations is to use phase masks. There are many different kinds of phase masks, which give rise to  the double-helix  PSF \cite{DH2008pavani,DH2009pavani}, the corkscrew PSF \cite{lew2011corkscrew}, the Tetrapod PSFs \cite{tetrapods2014shechtman,tetrapods2015shechtman} and the single-lobe PSF \cite{kumar2013psf,prasad2013rotating}. Since the amount of rotation of the PSF encodes the depth information for the double-helix PSF, the corkscrew and the single-lobe PSF, we call them rotating PSFs.
 Tetrapod PSFs \cite{tetrapods2014shechtman,tetrapods2015shechtman} represent another approach that is rather analogous to astigmatic imaging in that the source depth is encoded in the shape of the PSF. 
 
In this paper, we will only focus on Prasad's single-lobe rotating PSF \cite{prasad2013rotating}.
Several algorithms have been developed for {3D} point source localization, for example in the area of single molecule localization in microscopy 3D DAOSTORM \cite{3Ddao2012high} and 3D FALCON \cite{3D_FALCON20143d}. However these two methods cannot be applied for localization using rotating PSFs,  since they are  based on fitting Gaussian functions, which works well only for astigmatism-based PSFs.  Easy-DHPSF \cite{easy_DHPSF2013} is a popular approach for solving the double-helix PSF problem but it requires no overlap for PSF images which is often impossible to satisfy in reality. In \cite{Rice2016generalized}, the authors proposed a generalized method for various rotating PSFs by using 3D deconvolution with a regularization method. The optimization model is for an additive Poisson noise model that is only approximately correct under highly limited brightness conditions. {Their additive Poisson noise model is formed by adding a noise simulated from the Poisson distribution to the image}, {making their noise model unrealistically data-independent}. Owing to sparsity, a hybrid algorithm with matching pursuit and convex optimization was described in \cite{MP_CO2014three}, {but no theoretical analysis was presented}.
{It is therefore} important to consider an algorithm
for the high-density case {that makes no assumptions about source brightness}. With this aim, we propose an optimization model for single-lobe rotating PSF \cite{prasad2013rotating} based 3D localization {of point sources}. We emphasize, however, that our model is
 broadly applicable to a variety of PSFs.
 
\subsection{Problem development}
In image formation modeling, the point spread function (PSF) is the imaging system's response to light from a point source. Our purpose is to extend the work of Prasad on PSF engineering \cite{prasad2013rotating}, which proposed the use of depth-dependent image rotation, by developing a non-convex optimization algorithm for 3D point source localization using such an imaging system.  See Section  \ref{sec:Physics} for details.


By imposing spiral phase retaration with a phase winding number that changes in regular integer steps from one annular zone to the next of an aperture-based phase mask, one can create an image of a point source that has an approximate rotational shape invariance, provided the zone radii have a square root dependence on their indices. Specifically, when the distance of the source from the aperture of such an imaging system changes, the off-center, shape-preserving PSF merely rotates by an amount roughly proportional to the source misfocus from the plane of best focus.
The following general model based on the rotating PSF image describes the spatial distribution of image brightness for $M$ point sources. The observed 2D image without noise is
\begin{equation}
	I_0 (x,y) = \sum_{i=1}^M \mathcal{H}_i(x-x_i,y-y_i)f_i + b,
	\label{equ:forward_model}
\end{equation}
where $b$ is the uniform background and
$\mathcal{H}_i(x-x_i,y-y_i)$ is the rotating PSF for the $i$-th point source of flux $f_i$ and 3D position coordinates $(x_i, y_i, z_i)$ with the depth information coded in each $\mathcal{H}_i$, and $(x,y)$ is  the position in image plane.

Here, we  build a  forward model for the problem based in part  on the approach developed in  \cite{Rice2016generalized}.
In order to estimate the 3D locations of the point sources, {we discretize the distribution of point sources {on a} lattice  $\mathcal{X}\in \mathds{R}^{m \times n \times d}$}. 
The indices of the nonzero entries of $\mathcal{X}$ are the 3-dimensional locations of the point sources and the values at these entries   correspond to the fluxes, i.e., the energy emitted by the illuminated point source. The 2D observed image $G \in \mathds{R}^{m \times n}$ can be {approximated as}
\begin{equation*}
	{G \approx \mathcal{P}\left(\mathcal{T}(\mathcal{A} \ast \mathcal{X}  ) + b  \right),}
\end{equation*}
where $b $ is background signal and   $\mathcal{A} \ast \mathcal{X}$ is the convolution of $\mathcal{X}$ with the 3D PSF $\mathcal{A}$. This 3D PSF is a cube which is constructed by a sequence of images with respect to different depths of the points. Each slice is the image corresponding to a point source at the origin in the $(x,y)$ plane and at depth $z$.  Here $\mathcal{T}$ is an operator for extracting the last slice of the cube $\mathcal{A} \ast \mathcal{X}$ since the observed information is a snapshot, and $\mathcal{P}$ is the Poisson noise operator. 
\Cref{fig:illustration} shows the forward model for a specific PSF.
 In order to recover $\mathcal{X}$,  we need to solve a large-scale sparse 3D inverse problem given as follows:

\begin{equation}\label{model}
        \begin{split}
        	\min\limits_{\mathcal{X}} \   &  \quad \quad \quad  \|\mathcal{X}\|_0 \\
        \text{s.t.}	& \quad \quad  \mathcal{D}\left(\mathcal{X}\right) < \epsilon
        \end{split}
\end{equation}
where $\|\mathcal{X}\|_{0}$ is
the counting function
which gives the number of nonzero entries in $\mathcal{X}$ and is also called the {$\ell_0$ pseudo-norm}.  Here $\epsilon$ provides the noise level and $\mathcal{D}$ is a certain data-fitting term based on the noise model. This problem is $\mathrm{NP}$-hard \cite{NPhard1995natarajan} and there are a number of algorithms and methods \cite{CS2013foucart} being developed to find sparse representation or a good approximate solution.  Examples of such algorithms include greedy algorithms like orthogonal matching pursuit (OMP) \cite{OMP1993pati},  
iterative hard thresholding algorithm (IHT) \cite{IHT2008iterative}, which requires the knowledge of the exact number of sparse entries
in the ground truth,
and convex/non-convex relaxation methods  \cite{L12001atomic,mila2010nonconvex,mila2013nonconvex,mila2008nonconvex,CEL02015soubies,mila2015nonconvex}
where the $\ell_0$ term is approximated and \eqref{model} is replaced by a regularization model.
Here, we develop a convex/non-convex relaxation approach {for} solving (\ref{model}) for the Poisson noise model. In comparison to  state-of-the-art methods, our approach is more efficient even for high density cases. We do not need to know the exact number of the ground truth point sources and, furthermore, we have observed less overfitting than other methods. That is to say that our estimated result recovers fewer false point sources.

\begin{figure}[htbp]
\centering
\includegraphics[width=0.7\textwidth]{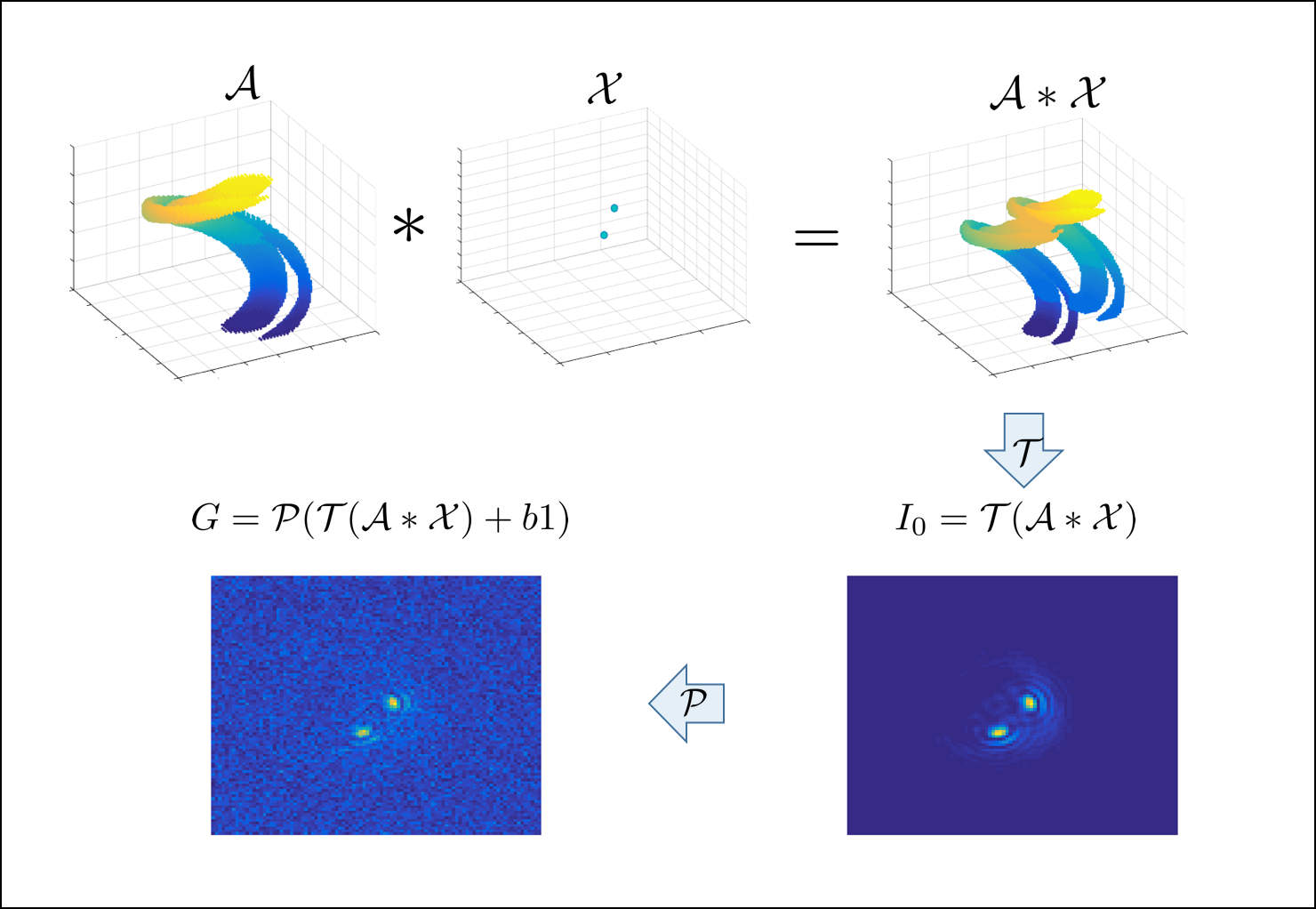}
\caption{Illustration of the discretized forward model for the single lobe PSF. }
\label{fig:illustration}
\end{figure}

\subsection{Outline of the paper}

The rest of the paper is organized as follows. In Section \ref{sec:Physics}, we describe the physics model for Prasad's single lobe rotating PSF. In Section \ref{sec:non-convex}, we propose a non-convex optimization model to solve the point source localization problem. In Section \ref{sec:alg}, our non-convex optimization algorithm is developed, with a post-processing step for eliminating clustered false positive point sources. In addition,  a new iterative scheme for estimating the flux values is also proposed in this section. Numerical experiments, including comparisons with other optimization models, are discussed in Section \ref{sec:numerical}.  Some concluding remarks are made in Section \ref{sec:Conclusions}.

\section{Physics model for single lobe rotating PSFs } \label{sec:Physics}
Here, we consider a new technique, recently patented by Prasad in \cite{kumar2013psf,prasad2013rotating}, for applying a rotating PSF with a single lobe to obtain depth from defocus. Comparing to the traditional double-helix rotating PSF \cite{DH2009pavani}, this single lobe PSF uses photons from targets more efficiently and yields images that are less cluttered and confounding in high-density swarm scenarios than other techniques.

The amount of rotation of Prasad's PSF encodes the depth position of the point source. Denote $A_{\zeta}$ as the PSF matrix for a point source with flux $f = 1$, transverse location $\mathbf{r}_0 =(x_0, y_0) =\mathbf{0}$, and defocus parameter $\zeta$. Note that $\zeta$ is proportional to
the object-distance $\delta z$ from the in-focus object plane, according to the relation
  \begin{equation}\label{zeta_delta_z}
	\zeta = -\frac{\pi \delta z R^2}{\lambda l_0(l_0+ \delta z)},
\end{equation}
where $l_0$ is the distance between the lens and {the plane of Gaussian focus} and $R$ is the radius of the pupil. Here the imaging wavelength is denoted by $\lambda$. The two-dimensional image data matrix $G\in \mathds{R}^{m\times n }$ with Poisson noise is {approximated as:}
\begin{equation}
	{G \approx \mathcal{P}\left(\sum_{i=1}^{M} \left( A_{\zeta_{i}}\ast \delta{(x_i,y_i)}\right)f_i  + b\right),}
	\label{equ:rpsf_total}
\end{equation}
where $b$ is the spatially uniform background, $\ast $ is the convolution operation, and $\delta{(x_i,y_i)}$ is the {transverse} location matrix for the $i$th source, with elements,
$$ \left(\delta{(x_i,y_i)}\right)_{uv} = \begin{cases}
	1, &(u,v) =(x_i,y_i),\\
	0, & {\rm otherwise}.
\end{cases}$$
The center of rotation of the image in the transverse plane, denoted by $\mathbf{r}_I = (x,y)$, is related to the source transverse location, $\mathbf{r}_0$, in the object plane {via linear magnification,}
$$\mathbf{r}_I = -\frac{z_I \mathbf{r}_0 }{ l_0+ \delta z }, $$
where $z_I$ is the distance between the image plane and the lens.

According to the Fourier optics model \cite{Goodman17}, the  incoherent PSF for a clear aperture containing a phase mask with optical phase 
retardation, $\psi(\mathbf{s})$, is given by 
{
\begin{equation}
	A_{\zeta}(\mathbf{s} ) = {1\over \pi}\left|\int P(\mathbf{u} )\mathrm{exp} \left[ \iota( 2\pi \mathbf{u}\cdot\mathbf{s} +  \zeta u^2 - \psi(\mathbf{u}))  \right] d \mathbf{u} \right|^2,
	\label{equ:A}
\end{equation}
where $\iota = \sqrt{-1} $, 
$P(\mathbf{u} )$ denotes the indicator function for the pupil of unit radius, and $\mathbf{s} = (s, \phi_{\mathbf{s}} ) $ is a scaled version of the image-plane position vector, $\mathbf{r}$, namely $\mathbf{s} =  \frac{\mathbf{r}}{\lambda z_I/R} $.
Here $\mathbf{r}$ is measured from the center of the geometric image point located at $\mathbf{r}_I$. The pupil-plane position vector $\boldsymbol{\rho}$ is normalized by the pupil radius, $\mathbf{u} = \frac{\boldsymbol{\rho}}{R}$. 
For the single-lobe rotating PSF, $\psi(\mathbf{u}) $ is chosen to be the spiral phase
 profile defined as $$\psi(\mathbf{u}) = l\phi_{\mathbf{u}}, \ \ \text{for } \sqrt{\frac{l-1}{L}}\leq u\leq \sqrt{\frac{l}{L}}, \ l = 1,\cdot\cdot\cdot, L, $$  in which $L$ is the number of concentric annular zones in the phase mask. We evaluate (\ref{equ:A}) by using the fast Fourier transform.}
\vspace{0.3cm}\\
{\bf Remark:} For the discretized forward model (see \Cref{fig:illustration}), we  get the dictionary $\mathcal{A}$ by sampling depths at regular intervals in the range, $\zeta_i\in [-\pi L, \ \pi L]$, over which the PSF performs one complete rotation about the geometric image center before it begins to break apart. The $i$-th slice of dictionary is denoted as $A_{\zeta_i}$.

\section{Non-convex optimization model }\label{sec:non-convex}
 In general, for inverse problems the objective function includes a data-fitting term and a regularization term. The regularization term, representing prior information, is based on the desired properties of the solution, while the data-fitting term measures the discrepancy between the estimated and the observed images according to the noise model. Here we consider the Poisson noise model for which the data fitting term is the $I$-divergence. The model with a regularization term, $\mu  {\mathcal R}(\mathcal{X})$ is
\begin{equation*}
	\min\limits_{\mathcal{X}\geq 0 }  D_{KL}(\mathcal{T}(\mathcal{A} \ast \mathcal{X})+b\,1, G) +\mu\mathcal{R}(\mathcal{X}),
	 \end{equation*}
	 where the uniform background is denoted as $b$. Here we also use it to denote the matrix $b\,1$, $1\in \mathds{R}^{m\times n}$, i.e., $b\,1$ is the ${m\times n}$ matrix with all entries $b$.
The data-fidelity term is the negative log-likelihood of the data. For Poisson noise, this term is the $I$-divergence which is also known as Kullback-Leibler (KL) divergence (see \cite{poisson_formula})
	   $$D_{KL}(z,g) = \langle g, \ln \frac{g}{z} \rangle + \langle 1, z-g \rangle. $$
	
This  is a large-scale sparse 3D inverse problem. For the regularization term, we consider a non-convex 
  function (see \cite{mila2010nonconvex,mila2013nonconvex,mila2008nonconvex,mila2015nonconvex}), using specifically
\begin{equation*}
	\mathcal{R}(\mathcal{X}):= \sum_{i,j,k = 1}^{m,n,d} \theta(\mathcal{X}_{ijk}) = \sum_{i,j,k = 1}^{m,n,d} \frac{|\mathcal{X}_{ijk}|}{a+|\mathcal{X}_{ijk}|},
\end{equation*}
where $a$ is fixed and
determines the degree of non-convexity, see \Cref{fig:curve_a_epsilon}(a).
Therefore, the minimization problem amounts to
	 \begin{equation}
	 	\min\limits_{\mathcal{X}\geq 0 }\left\{  \left\langle 1,  \mathcal{T}(\mathcal{A} \ast \mathcal{X})- G \ln(\mathcal{T}(\mathcal{A} \ast \mathcal{X})+b \,1) \right\rangle + \mu\sum_{i,j,k = 1}^{m,n,d} \frac{|\mathcal{X}_{ijk}|}{a+|\mathcal{X}_{ijk}|}\right\}. \label{equ:min_fun}
	 \end{equation}
Here, $\theta(t) = \lim\limits_{\epsilon \to 1} \theta_\epsilon(t), $ where $\theta_\epsilon(t) = \frac{|t|}{a+ \epsilon |t|}. $ Since $\theta_\epsilon(t)$ represents the {$\ell_1$-norm} when $\epsilon = 0$, we see the process of increasing non-convexity as $\epsilon$ increases from 0 to 1; see \Cref{fig:curve_a_epsilon}(b).
Moreover, we easily get the derivative of
$\theta (t)$:
\begin{equation*}
	\theta^\prime (t) = \frac{a \,\mathrm{sign}(t) }{(a+|t|)^2}, \quad \text{when } t\neq0, \text{and} \quad  \theta^\prime(0^+) = \frac{1}{a},
\end{equation*}
where $\theta^\prime(0^+)$ is the right
derivative at zero.


\begin{figure}[htbp]
\centering
\subfloat[$\theta(t;a)$]{\includegraphics[width=0.4\textwidth]{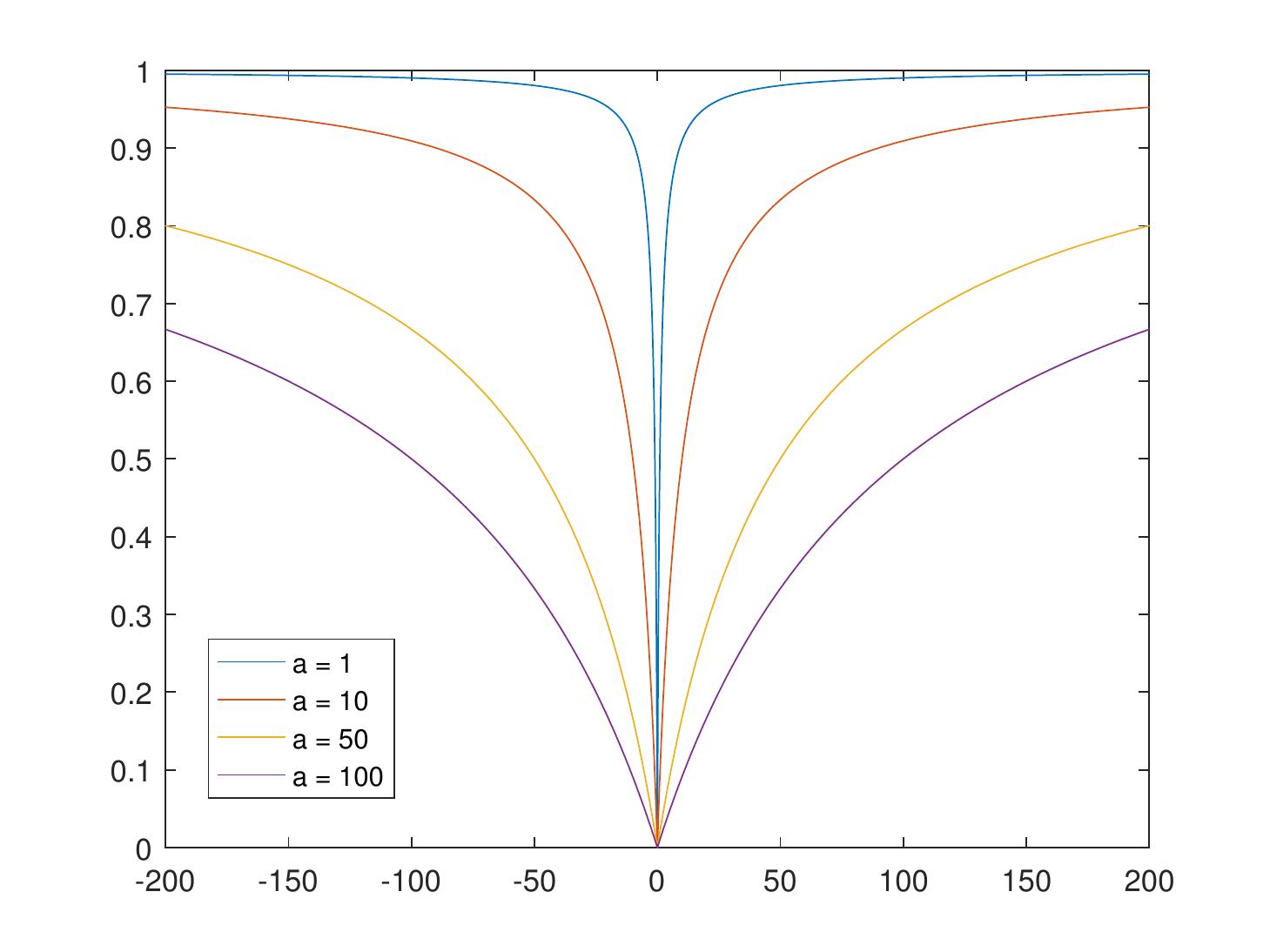}} \hspace{0.01mm}
\subfloat[$\theta_\epsilon(t;a)$]{\includegraphics[width=0.4\textwidth]{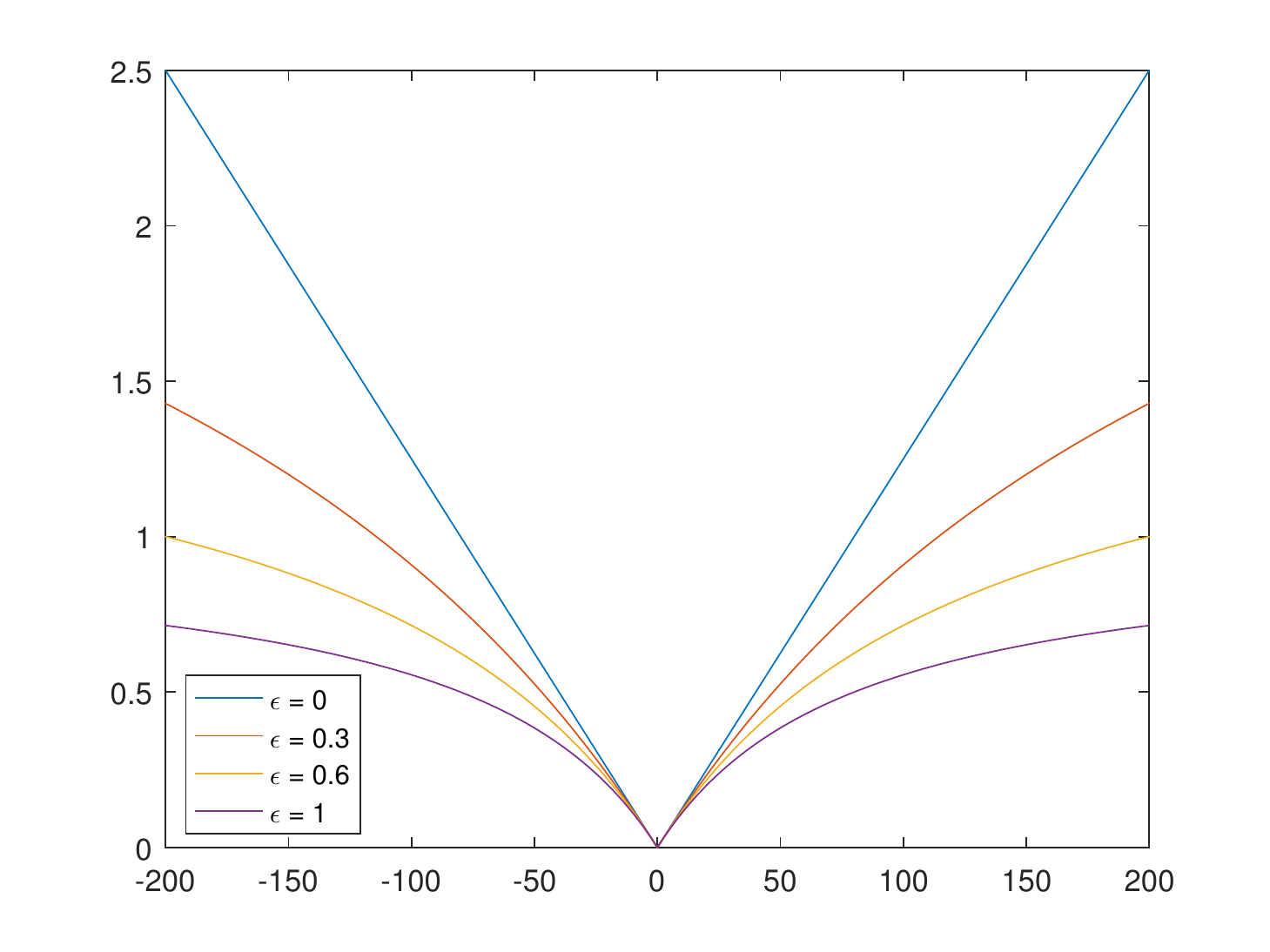}}\hspace{0.01mm}
\caption{Plot of  $\theta(t;a)$ and $\theta_\epsilon(t;a)$. In (a), we choose different values of $a$; in (b) we choose different values of $\epsilon$ while fixing $a = 80$. }
\label{fig:curve_a_epsilon}
\end{figure}

\section{Algorithm development}\label{sec:alg}
The minimization problem \eqref{equ:min_fun} involves non-convex 
 optimization. Here we use an iterative reweighted $\ell_1$ algorithm (IRL1) \cite{IRL1_2015} to solve this optimization problem. It is a majorization-minimization method which solves a series of convex optimization problem with a weighted-$\ell_1$ regularization term.
 It considers the problem (see Algorithm 3, in \cite{IRL1_2015})
\begin{equation*}
	\min_{x\in X} F(x):=F_1(x)+ F_2(G(x)),
\end{equation*}
 where $X$ is the constraint set. $F$ is a lower semicontinuous (lsc) function, extended, real-valued, proper, while $F_1$ is proper, lower-semicontinous, and convex and $F_2$ is coordinatewise nondecreasing, i.e. $F_2(x)\leq F_2(x+t e_i)$ with $x, x+t e_i\in G(X)$ and $t >0,$ where  $e_i$ is the $i$-th canonical basis unit vector. The function $F_2$ is concave on $G(X)$.   The IRL1 iterative scheme \cite[Algorithm 3]{IRL1_2015} is
 \begin{equation*}
 	\begin{cases}
 		w^l = \partial F_2(y), \ y = G(x^l), \\
 		x^{l+1} = \argmin\limits_{x\in X} \left\{F_1(x) + \langle w^l, G(x)\rangle \right\},
 	\end{cases}
 \end{equation*}
where $\partial$ stands for subdifferential.
For our problem \eqref{equ:min_fun}, we can choose:
\begin{equation*}
	\begin{split}
		F_1(\mathcal{X}) = & \ \langle 1, \  \mathcal{T}(\mathcal{A} \ast \mathcal{X})- G \log(\mathcal{T}(\mathcal{A} \ast \mathcal{X})+b \,1) \rangle;  \\
		F_2(\mathcal{X}) = & \  \mu \sum_{i,j,k = 1}^{m,n,d} \frac{\mathcal{X}_{ijk}}{a+\mathcal{X}_{ijk}}; \\
		G(\mathcal{X}) = & \  |\mathcal{X}|; \\
		X = & \ \{\mathcal{X} \ | \ \mathcal{X}_{ijk} \geq 0 \text{ for all } i,j,k  \}.
	\end{split}
\end{equation*}
Therefore, we compute the partial derivative of $w^l$ and get  $w^l_{ijk} =\frac{a \mu } { \left(a + \hat{\mathcal{X}}_{ijk}^l  \right)^2 }, \quad \forall i,j,k$. Here $w^l\neq0$ is finite, since $a, \mu \neq 0$, and all $\mathcal{X}_{ijk}\geq 0$ owning to the constraint $X$.
According to \cite{poisson_marcia,mila2008nonconvex}, these terms satisfy the requirements of the algorithm. \Cref{alg:outer} gives the IRL1 for solving \eqref{equ:min_fun}.
\begin{algorithm}[h]
\begin{algorithmic}[1]
\Require $\mathcal{X}^0\in \mathds{R}^{m\times n \times d}$ and $G\in \mathds{R}^{m \times n}$. Set $a$ and $ \mu$.
\Ensure The solution $\mathcal{X}^\ast$ which is the minimizer in the last outer iteration.
\Repeat
    \State Compute $w^l_{ijk} =\frac{a \mu } { \left(a + \hat{\mathcal{X}}_{ijk}^l  \right)^2 }, \quad \forall i,j,k$;
    \State Given $G$,  $w^l_{ijk}$,  obtain $\hat{\mathcal{X}}^l$ by solving
    \begin{equation}
        \hat{\mathcal{X}}^l = \argmin_{\mathcal{X}\geq 0  } \left\{ \left\langle 1, \  \mathcal{T}(\mathcal{A} \ast \mathcal{X})- G \log(\mathcal{T}(\mathcal{A} \ast \mathcal{X})+b \,1) \right\rangle + \sum_{i,j,k=1}^{m,n,d} w^l_{ijk}  |\mathcal{X}_{ijk} | \right\}.\label{equ:subproblem}
    \end{equation}
   \Until{convergence}
\end{algorithmic}
\caption{Iterative reweighted $\ell_1$ algorithm (IRL1) for the rotating PSF problem}
\label{alg:outer}
\end{algorithm}

\subsection{Subproblem of IRL1}
In \Cref{alg:outer}, we need to solve the subproblem \eqref{equ:subproblem}, which can be solved by the alternating direction method of multipliers (ADMM) \cite{ADMM2011boyd}. For this,
we introduce two auxiliary variables, namely $\mathcal{U}_0$ and $\mathcal{U}_1, $ and transform \eqref{equ:subproblem} into
\begin{equation*}
	\min\limits_{\mathcal{U}_1\geq 0, \ \mathcal{U}_0, \ \mathcal{X}} \left\{  \langle 1, \mathcal{T} \mathcal{U}_0 - G \log (\mathcal{T}\mathcal{U}_0 + b \,1) \rangle + \sum_{i,j,k=1}^{m,n,d} w^l_{ijk}  |(\mathcal{U}_1)_{ijk} | \ : \ \mathcal{U}_0 = \mathcal{A}\ast \mathcal{X}, \ \mathcal{U}_1 = \mathcal{X} \right\}.
\end{equation*}
The nonnegative constraint can be rewritten as an indicator function such that the objective function in this subproblem becomes
\begin{equation}
	\label{equ:aug}
 \langle 1, \mathcal{T} \mathcal{U}_0 - G \log (\mathcal{T}\mathcal{U}_0 + b \,1) \rangle + \sum_{i,j,k=1}^{m,n,d} w^l_{ijk}  |(\mathcal{U}_1)_{ijk} |
+ \mathbf{1}_{+}(\mathcal{U}_1),
\end{equation}
 where $\mathbf{1}_+(\mathcal{X})$ is the indicator function of the nonnegative constraint
    \begin{equation*}
    	\mathbf{1}_+(\mathcal{X})=
    	\begin{cases}
    		0, & \mathcal{X} \geq 0,\\
 \infty, & \text{otherwise}.     	
 \end{cases}
    \end{equation*}
Therefore, the augmented Lagrangrian function $\mathcal{L}(\mathcal{U}_0, \mathcal{U}_1, \mathcal{X}, \eta_0, \eta_1 )$  \cite{nocedal2006numerical} for  subproblem \eqref{equ:subproblem} is
\begin{equation*}
\begin{split}
	\mathcal{L}(\mathcal{U}_0, \mathcal{U}_1, \mathcal{X}, \eta_0, \eta_1 ):= & \langle 1, \mathcal{T} \mathcal{U}_0 - G \log (\mathcal{T}\mathcal{U}_0 + b \,1) \rangle + \sum_{i,j,k=1}^{m,n,d} w^l_{ijk}  |(\mathcal{U}_1)_{ijk} | \\
	& + \frac{\beta_0}{2}\|\mathcal{U}_0-\mathcal{A} \ast \mathcal{X} - \eta_0 \|_F^2 + \frac{\beta_1 }{2} \|\mathcal{U}_1-\mathcal{X}-\eta_1\|_F^2+ \mathbf{1}_+(\mathcal{U}_1 ),
	\end{split}
\end{equation*}
where $\eta_0, \eta_1 \in \mathds{R}^{m\times n \times d} $ are the Lagrange multipliers and $\beta_0, \beta_1 >0$. Here $\|\mathcal{X}\|_F$ is the Frobenius norm of $\mathcal{X}$, which is equal to the {$\ell_2$-norm} of the vectorized $\mathcal{X}$.   Starting at $\mathcal{X} =\mathcal{X}^t, \  \mathcal{U}_1 = \mathcal{U}_1^t$ and $\mathcal{U}_2 = \mathcal{U}_2^t, $ applying ADMM in \cite{ADMM2011boyd} yields the iterative scheme
\begin{subequations}
	\begin{align}
		\mathcal{U}_0^{t+1} &= \argmin\limits_{\mathcal{U}_0} \left\{\langle 1, \mathcal{T} \mathcal{U}_0 - G \log (\mathcal{T}\mathcal{U}_0 + b \,1) \rangle  + \frac{\beta_0 }{2} \|\mathcal{U}_0- \mathcal{A} \ast \mathcal{X}^{t} -\eta_0^{t} \|_F^2 \right\} \label{equ:U_0} \\
		\mathcal{U}_1^{t+1} &= \argmin\limits_{\mathcal{U}_1 \geq 0} \left\{  \sum_{i,j,k=1}^{m,n,d} w^l_{ijk}  |(\mathcal{U}_1)_{ijk} |  + \frac{\beta_1 }{2} \|\mathcal{U}_1-\mathcal{X}^t-\eta_1^t\|_F^2\right\} \label{equ:U_1} \\
		\mathcal{X}^{t+1} &= \argmin\limits_{\mathcal{X}} \left\{  \frac{\beta_0}{2}\left\|\mathcal{U}_0^{t+1}-\mathcal{A} \ast \mathcal{X} - \eta_0^{t} \right\|_F^2 + \frac{\beta_1 }{2} \left\|\mathcal{U}_1^{t+1}-\mathcal{X}-\eta_1^{t}\right\|_F^2\right\} \label{equ:X} \\
		\eta_0^{t+1} & = \eta_0^{t} - {\rho (\mathcal{U}_0^{t+1} -\mathcal{A} \ast \mathcal{X}^{t+1})} \\
		\eta_1^{t+1} & = \eta_1^{t} - {\rho( \mathcal{U}_1^{t+1} - \mathcal{X}^{t+1})}
	\end{align}
\end{subequations}
{Here $\rho\in \left(0, \frac{1+\sqrt{5}}{2}\right)$ is the dual steplength. We choose $\rho=1.618$ in our numerical tests.}
The $\mathcal{U}_1$-subproblem \eqref{equ:U_1} is solved by soft-thresholding under nonnegative constraint. So  the closed-form solution is given by 
\begin{equation*}
	\left(\mathcal{U}_1^{t+1}\right)_{ijk} = \max \left\{\mathcal{X}^t+\eta_1^t - w^l_{ijk}/\beta_1,\  0 \right\}.
\end{equation*}
The $\mathcal{X}$-subproblem \eqref{equ:X} is a least squares problem. We rewrite the convolution into  componentwise multiplication by using the Fourier transform and then the {objective function of} the subproblem becomes
$$ \frac{\beta_0}{2}\left\|\mathcal{F}\{\mathcal{A}\} \cdot \mathcal{F}\{\mathcal{X}\} - \mathcal{F}\left\{\mathcal{U}_0^{t+1}-\eta_0^{t}\right\} \right\|_F^2 + \frac{\beta_1 }{2} \left\|\mathcal{F}\{\mathcal{X}\}-\mathcal{F}\{\mathcal{U}_1^{t+1}-\eta_1^{t}\}\right\|_F^2. $$
Its closed-form solution reads as
\begin{equation}
	\mathcal{X}^{t+1} = \mathcal{F}^{-1}
	\left\{\left(\left|\mathcal{F}\{\mathcal{A}\right\}|^2+\frac{\beta_1}{\beta_0} \right)^{-1}\left(\overline{\mathcal{F}\{\mathcal{A}\}} \cdot \mathcal{F}\{\mathcal{U}_0^{t+1}-\eta_0^{t}\} +\frac{\beta_1}{\beta_0} \mathcal{F}\{\mathcal{U}_1^{t+1}-\eta_1^{t}\}\right)  \right\},
	\label{equ:fft}
	\end{equation}
	where $\left|\mathcal{X}\right|^2$ and $\overline{\mathcal{X}}$ are the componentwise operations of the square of absolute value of $\mathcal{X}$, and complex conjugate of $\mathcal{X}$, respectively.
	By assuming the boundary condition to be periodic, we use the 3D fast Fourier transformation (3D FFT) to compute \eqref{equ:fft} efficiently.
	
	 For the solution of $\mathcal{U}_0$-subproblem \eqref{equ:U_0}, we need \Cref{them:closed_form_soln}.  The range of the following proposition goes beyond the solution of \eqref{equ:U_0}.
It gives a closed-form solution to a highly nonlinear and useful functional,
namely the Thikhonov regularized KL divergence involving high-dimensional linear operators.
It can also be seen as the closed-form solution of the proximal operator of this general KL  divergence.
{
\begin{proposition}
\label{them:closed_form_soln}
Given $\beta, b\in \mathds{R}^1, \eta\in \mathds{R}^{m\times n \times d}$ and $G\in \mathds{R}^{m \times n}$,  consider the minimization problem
\begin{equation}
\min\limits_{\mathcal{U}} \left\{\langle 1, \mathcal{T} \mathcal{U} - G \log (\mathcal{T}\mathcal{U} + b\, 1) \rangle  + \frac{\beta }{2} \|\mathcal{U}- \xi \|_F^2 \right\}.
	\label{equ:thm}
\end{equation}
Then \eqref{equ:thm} has a closed-form solution
	\begin{equation}
    	\mathcal{U}^\ast_{ijk} = 
    	\begin{cases} 
    		\frac{-(1 + \beta b - \beta \xi_{ijk}) + \sqrt{(1 - \beta  b - \beta \xi_{ijk})^2+4 \beta  G_{ij}   }}{2 \beta  }, & \text{if } k =  d, \\
    		\xi_{ijk}, & \text{otherwise}.
    	\end{cases}\label{equ:u_closed}
    \end{equation}
\end{proposition}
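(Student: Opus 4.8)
The plan is to exploit that $\mathcal{T}$ merely selects the slice $k=d$, so that the objective in \eqref{equ:thm} decouples completely across the entries of $\mathcal{U}$. Writing it as
\[
\sum_{i,j}\Big(\mathcal{U}_{ijd}-G_{ij}\log(\mathcal{U}_{ijd}+b)\Big)\;+\;\frac{\beta}{2}\sum_{i,j,k}\big(\mathcal{U}_{ijk}-\xi_{ijk}\big)^2 ,
\]
I observe that for every triple $(i,j,k)$ with $k\neq d$ the only term containing $\mathcal{U}_{ijk}$ is $\tfrac{\beta}{2}(\mathcal{U}_{ijk}-\xi_{ijk})^2$, which (since $\beta>0$) is minimized at $\mathcal{U}^\ast_{ijk}=\xi_{ijk}$; this is already the second branch of \eqref{equ:u_closed}. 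It then remains to solve, independently for each pixel $(i,j)$, the scalar problem $\min_{u}\;h(u)$ with $h(u):=u-G_{ij}\log(u+b)+\tfrac{\beta}{2}(u-\xi_{ijd})^2$ on the admissible set $\{u:u+b\ge 0\}$.

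Next I would settle existence, uniqueness, and the first-order characterization. Using $\beta>0$ and $G_{ij}\ge 0$ (photon counts are nonnegative), we get $h''(u)=G_{ij}/(u+b)^2+\beta>0$ on the open domain $u+b>0$, so $h$ is strictly convex there, and $h(u)\to+\infty$ both as $u\downarrow -b$ (when $G_{ij}>0$) and as $u\to+\infty$; hence the scalar minimizer exists and is unique, characterized by $h'(u)=1-G_{ij}/(u+b)+\beta(u-\xi_{ijd})=0$. Multiplying through by $u+b>0$ converts this into the quadratic
\[
\beta u^2+\big(1+\beta b-\beta\xi_{ijd}\big)u+\big(b-\beta b\,\xi_{ijd}-G_{ij}\big)=0 .
\]
The one genuinely computational step is to simplify the discriminant of this quadratic: a short expansion shows the cross terms recombine so that it equals $(1-\beta b-\beta\xi_{ijd})^2+4\beta G_{ij}$, which is manifestly nonnegative, and the two real roots are $\big(-(1+\beta b-\beta\xi_{ijd})\pm\sqrt{(1-\beta b-\beta\xi_{ijd})^2+4\beta G_{ij}}\big)/(2\beta)$.

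Finally I would pin down the sign. Evaluating the left-hand side of the quadratic at $u=-b$ yields exactly $-G_{ij}\le 0$; since the leading coefficient $\beta$ is positive, the parabola opens upward, so $-b$ lies at or between the two roots, and therefore the larger root — the $+$ branch — satisfies $u\ge -b$ and is feasible, while the $-$ branch lies at or below $-b$ and is (for $G_{ij}>0$) infeasible. Combining this with the strict convexity and the unique stationary point from the previous paragraph identifies the $+$ branch as the global minimizer of $h$, which is the first branch of \eqref{equ:u_closed}; assembling the pixelwise solutions over all $(i,j,k)$ gives $\mathcal{U}^\ast$. The main obstacle is bookkeeping rather than conceptual: carrying out the algebra that collapses the discriminant to the clean stated form, and verifying that the $+$ root is always admissible — including the borderline case $G_{ij}=0$, where the logarithmic term is vacuous and one minimizes over the closed set $\{u+b\ge 0\}$, a check that is easily seen to be consistent with formula \eqref{equ:u_closed}.
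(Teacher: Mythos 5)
Your proposal is correct, and it follows the same overall decomposition as the paper: both observe that $\mathcal{T}$ only touches the slice $k=d$, so the objective separates, the entries with $k\neq d$ are trivially $\xi_{ijk}$, and the last slice reduces to independent scalar problems of the form $\min_u \{u - G_{ij}\log(u+b) + \tfrac{\beta}{2}(u-\xi_{ijd})^2\}$. The difference is in how that scalar problem is dispatched. The paper substitutes $y=\mathcal{U}_{ijd}+b$ to recast it as a one-dimensional Tikhonov-regularized KL problem and then simply cites an existing lemma for the explicit minimizer, whereas you solve it from first principles: strict convexity and coercivity on $\{u+b>0\}$ give existence and uniqueness, the stationarity condition multiplied by $u+b$ gives a quadratic whose discriminant collapses to $(1-\beta b-\beta\xi_{ijd})^2+4\beta G_{ij}\ge 0$, and evaluating the quadratic at $u=-b$ (which yields $-G_{ij}\le 0$) shows the $+$ root is the unique feasible stationary point. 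I checked the algebra — the quadratic, the discriminant simplification, and the root-selection argument are all right. Your version buys a fully self-contained argument that also settles well-posedness and the boundary case $G_{ij}=0$ explicitly, points the cited lemma leaves implicit; the paper's version is shorter at the cost of an external dependency. Incidentally, your direct computation lands on the numerator $-(1+\beta b-\beta\xi_{ijd})$ exactly as in \eqref{equ:u_closed}, which also confirms that the sign appearing in the paper's intermediate display for $y^\ast$ is a typographical slip rather than a substantive discrepancy.
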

\begin{proof}
Denote $\mathcal{J}(\mathcal{U}) := \langle 1, \mathcal{T} \mathcal{U} - G \log (\mathcal{T}\mathcal{U} + b \,1) \rangle  + \frac{\beta }{2} \|\mathcal{U}- \xi \|_F^2. $ The minimizer of $\mathcal{J}$ with respect to $\mathcal{U}$ satisfies $\nabla \mathcal{J}(\mathcal{U}^\ast) = 0$. 
Here  the minimization of $\mathcal{J}(\mathcal{U})$  can be considered as a sequence sub-minimization problems separately. 
 Then the minimizer can be achieved by considering partial derivative separately. 
In addition, $\langle 1, \mathcal{T}\mathcal{U}-G \log (\mathcal{T}\mathcal{U} + b \,1) \rangle$ can be rewritten as $\sum\limits_{i,j=1}^{m,n}\left(\mathcal{U}_{ijd}-G_{ij} \log (\mathcal{U}_{ijd} + b )\right) $ which does not involve $\mathcal{U}_{ijk}$ with $k \neq d$. Hence the partial derivative of $\langle 1, \mathcal{T}\mathcal{U}-G \log (\mathcal{T}\mathcal{U} + b \,1) \rangle$  with respect to the entries $\{(i,j,k), k\neq d \}$ are zero.   By $\nabla \frac{\beta}{2} \|\mathcal{U}-\xi\|^2_F = \beta(\mathcal{U}-\xi)$  we get
\begin{equation}
	\label{equ:U}
	\mathcal{U}^\ast_{ijk} = \xi_{ijk} \text{ if } k \neq d,
\end{equation}
For those entries in the last slice of $\mathcal{U}^\ast$, we rewrite the minimization problem in these entries as 
\begin{equation}
\min\limits_{\mathcal{U}_{ijd}} \left\{\langle 1,  \mathcal{U}_{ijd} - G_{ij} \log (\mathcal{U}_{ijd}  + b) \rangle  + \frac{\beta }{2} (\mathcal{U}_{ijd} - \xi_{ijd})^2 \right\}.
	\label{equ:min_u}
\end{equation}
Denoting $y = \mathcal{U}_{ijd}  + b$, then \eqref{equ:min_u} becomes a standard one-dimensional KL divergence problem $$\min\limits_y \left\{D_{KL}(y,G_{ij}) + \frac{\beta}{2}(y-\xi_{ijd}-b)\right\}. $$ 
By   \cite[Lemma 2.2]{Idivergence2013minimization}, the explicit form of the minimizer is  
	\begin{equation}
		y^\ast = \frac{-(1 - \beta b - \beta \xi_{ijk}) + \sqrt{(1 - \beta  b - \beta \xi_{ijk})^2+4 \beta  G_{ij}   }}{2 \beta  }. 
		\label{eqn:closed_form}
	\end{equation}
Combining \eqref{equ:U} and \eqref{eqn:closed_form}, we obtain \eqref{equ:u_closed}. 
\end{proof}
Based on \Cref{them:closed_form_soln},   we can obtain a closed-form solution for the $\mathcal{U}_0$-subproblem by setting $\xi = \mathcal{A} \ast \mathcal{X}^{t} +\eta_0^{t}$ and $\beta = \beta_0$
}

{
{\bf Remark:}
The convergence  of \eqref{equ:subproblem} is ensured by \cite[Theorem 4.2]{Idivergence2013minimization}, since the discrete forward model $\mathcal{T}(\mathcal{A}\ast \mathcal{X})$ can be written as matrix-vector multiplication. And,  we can regarded $L$ in \cite[Theorem 4.2]{Idivergence2013minimization} as an identify matrix. }
\subsection{Removing false positives using a centroid method}\label{subsec;centroid}
For real data, point sources may be not on the grid, which means the discrete model is not accurate. In order to avoid missing sources, the regularization parameter $\mu$ is kept small, which leads to the over-fitting effect. That is to say that there are more point sources in our estimated result than the ground truth,  which we must attempt to mitigate.
Our optimization solution generally contains tightly clustered point sources, so we need to regard any such cluster of point sources as a single point source.
{The same phenomenon has been observed in \cite{FALCON2014,Rice2016generalized,clustered2012fasterstorm}.
To this end, we apply a post-processing approach following
\cite{clustered2012fasterstorm}. } The method is based on a well-defined tolerance distance for recognizing clustered neighbors denoted as $C$. The criterion is to find nonzero entries that are within a certain distance from the given point whose pixel value is higher than or equal to other nonzero points in the neighborhood.
For recognizing the cluster of point sources, we need to start from the entry with the highest intensity. By the above criterion, one clustered neighborhood is recognized.
Then we compute the centroid of each cluster. That is to say, to get the centroid location $(x,y,z)$ for one cluster $C$ as

\begin{equation*}
		x = \frac{\sum_{(i,j,k)\in C}i\mathcal{X}_{ijk}}{\sum_{(i,j,k)\in C}\mathcal{X}_{ijk}}; \quad  \
		y =  \frac{\sum_{(i,j,k)\in C}j\mathcal{X}_{ijk}}{\sum_{(i,j,k)\in C}\mathcal{X}_{ijk}}; \quad  \
		z =  \frac{\sum_{(i,j,k)\in C}k\mathcal{X}_{ijk}}{\sum_{(i,j,k)\in C}\mathcal{X}_{ijk}}.
\end{equation*}
The flux for this representative point source in cluster $C$ is $\sum_{(i,j,k)\in C}\mathcal{X}_{ijk}$.
We set the value of these entries in the recognized neighborhood as 0 and then we proceed with the searching  process recursively.  An estimated point source that cannot find out its corresponding ground truth is called false positive.
 To some false positives which may be due to the effect of the periodic boundary condition,  we set a threshold, say 5\% of highest intensity \cite{Rice2016generalized}. Those entries whose pixel value is lower than the threshold will be regarded as false positives so that we directly delete them in the searching  process.  The whole post-processing method is summarized in \Cref{alg:post-procesing}.
\begin{algorithm}[h]
\caption{Removing cluster point sources by computing centroids (Centroid FP). }
\label{alg:post-procesing}
	\begin{algorithmic}[1]
\State Set $\mathcal{U}$ as a zero 3D tensor of the same size as $\mathcal{X}$;
	\State Find the maximum value of the solution $\mathcal{X}$;
	\State Recognize the neighbour $C$ based on the above criterion and recursively check the other nonzero points;
	\State Compute the centroid of $C$ and set the value of this centroid entry in $\mathcal{U}$ as the summation of all the  pixel values in $C$;
	\State Set $\mathcal{X}(C) = 0$;
\State If there is any nonzero entry in $\mathcal{X}$, go to Step 2, otherwise, go to the  next step;
\State Set pixel value of these entries in $\mathcal{U}$ whose value is lower than $5\%$ of the  highest intensity as 0.
	\end{algorithmic}
\end{algorithm}

\subsection{Estimating the flux values}
In \Cref{sec:alg}, we provided an algorithm for estimating the locations and fluxes for the point sources. However, numerical results show that the flux values are generally underestimated. In \cite{FALCON2014,Rice2016generalized},  least squares fitting is used for improving the resolution as well as updating the corresponding fluxes. However, our problem is not Gaussian noise or additive Poisson noise as used in their paper. Our Poisson noise is data-dependent, which cannot force the regenerated image $\mathcal{T}(\mathcal{A}\ast \mathcal{X^\ast})$ to match the observed data with least squares, where $\mathcal{X}^\ast$ is the result after solving the non-convex optimization problem and post-processing.
Our aim is thus to estimate the source fluxes from the KL data-fitting term appropriate to the Poisson noise model
when the source 3D positions have already been accurately estimated.

Let the PSF corresponding to the $i$-th source
be arranged as the column vector $\mathbf{h}_{i}$. The stacking of the $M$ column vectors in the same sequence as
the source labels for the $M$ sources then defines a system PSF matrix $H$, with $H= \left[\mathbf{h}_1, \mathbf{h}_2, \cdots, \mathbf{h}_M\right]\in \mathds{R}^{K\times M}, $ where $K$ is the total number of pixels in the vectorized data array, so $K=mn. $
The vectorized observed image is denoted by  $\mathbf{g} \in \mathds{R}^{ K \times 1}$.  The uniform background is denoted as  the vector $b\mathbf{1}$ with  $\mathbf{1}\in \mathds{R}^{K\times 1}$.
The flux vector is denoted as $\mathbf{f}\in \mathds{R}^{M\times 1}$. Here the problem is overdetermined meaning that the number of point sources $M$ is much smaller than the number of available data $K = mn$.   Therefore we need to do some refinement of the estimates by minimizing directly data fitting term.
 Since the negative log-likelihood function for the Poisson model, up to certain data dependent terms,
is simply the KL divergence function,
\begin{equation*}
	D_{KL}(H \mathbf{f} +b \mathbf{1}, \mathbf{g}) = \left\langle \mathbf{1}, H \mathbf{f} - \mathbf{g} \log(H \mathbf{f} + b \mathbf{1})  \right\rangle,
\end{equation*}
its minimization with respect to  the flux vector $\mathbf{f}$, performed
by setting the gradient of $D_{KL}$  with respect  to $\mathbf{f}$ (see \cite{poisson_marcia}) zero, yields the nonlinear relation
\begin{equation}
\label{e2}
\begin{split}
\nabla D_{KL}(H \mathbf{f} +b \mathbf{1}, \mathbf{g}) = & H^T \mathbf{1} - \sum\limits_{i = 1}^K \frac{\mathbf{g}_i}{\mathbf{e}_i^T (H \mathbf{f} + b \mathbf{1})} H^T \mathbf{e}_i \\
= & \sum\limits_{i = 1}^K \frac{\mathbf{e}_i^T\left( H \mathbf{f} + b \mathbf{1} -\mathbf{g}\right)}{\mathbf{e}_i^T (H \mathbf{f} + b \mathbf{1})} H^T \mathbf{e}_i = 0,
\end{split}
\end{equation}
{where $\mathbf{e}_i$ is the $i$-th canonical basis unit vector. }




Consider now an iterative solution of \eqref{e2}. Since $b$ is proportional to the vector of ones, $\mathbf{1}$,
we write \eqref{e2} as

\begin{equation*}
	\begin{split}
		0 = & \ b D_{KL}(H \mathbf{f} +b \mathbf{1}, \mathbf{g})  \\
		= & \ \sum\limits_{i = 1}^K \frac{\mathbf{e}_i^T\left( H \mathbf{f} + b\mathbf{1} -\mathbf{g} \right)\mathbf{e}_i^T (b \mathbf{1})}{\mathbf{e}_i^T (H \mathbf{f} + b \mathbf{1})} H^T \mathbf{e}_i \\
		= & \ \sum\limits_{i = 1}^K \frac{\mathbf{e}_i^T\left( H \mathbf{f} + b \mathbf{1} -\mathbf{g} \right)\mathbf{e}_i^T\left( H \mathbf{f} + b \mathbf{1} -H \mathbf{f} \right)}{\mathbf{e}_i^T (H \mathbf{f} + b \mathbf{1})} H^T \mathbf{e}_i \\
		= & \ \sum\limits_{i=1}^K  \mathbf{e}_i^T\left( H \mathbf{f} + b \mathbf{1}-\mathbf{g} \right) H^T \mathbf{e}_i  -  \sum\limits_{i = 1}^K\frac{\mathbf{e}_i^T\left( H \mathbf{f} + b \mathbf{1}-\mathbf{g} \right)\mathbf{e}_i^TH \mathbf{f}}{\mathbf{e}_i^T( H \mathbf{f} + b \mathbf{1})} H^T \mathbf{e}_i \\
		= & \ H^T H \mathbf{f} + H^T \left(b \mathbf{1} -\mathbf{g} \right)  - \sum\limits_{i = 1}^K\frac{\mathbf{e}_i^T\left( H \mathbf{f} + b \mathbf{1}-\mathbf{g} \right)\mathbf{e}_i^TH \mathbf{f}}{\mathbf{e}_i^T (H \mathbf{f} + b \mathbf{1})} H^T \mathbf{e}_i.
	\end{split}
\end{equation*}
In this form, we easily get
\begin{equation*}
	H^T H \mathbf{f} =  H^T \left( \mathbf{g}-b \mathbf{1} \right)  + \sum\limits_{i = 1}^K\frac{\mathbf{e}_i^T\left( H \mathbf{f} + b \mathbf{1}-\mathbf{g} \right)\mathbf{e}_i^TH \mathbf{f}}{\mathbf{e}_i^T (H \mathbf{f} + b \mathbf{1})} H^T \mathbf{e}_i,
\end{equation*}

By multiplying by $(H^T H)^{-1}$ the two sides of this equation,  we directly get
\begin{equation}
	\label{equ:formula_f}
	\mathbf{f} = \mathbf{f}_G +  \sum\limits_{i = 1}^K\frac{\mathbf{e}_i^T\left( H \mathbf{f} + b \mathbf{1}-\mathbf{g} \right)\mathbf{e}_i^TH \mathbf{f}}{\mathbf{e}_i^T (H \mathbf{f} + b \mathbf{1})} H^+ \mathbf{e}_i,
\end{equation}
where $H^{+} = (H^T H)^{-1}H^T$ and $\mathbf{f}_G = H^{+}(\mathbf{g}-b \mathbf{1} )$ is the solution corresponding to the Gaussian noise model.
This  suggests the following iterative algorithm:

\begin{equation}
	\mathbf{f}^{n+1} = \mathbf{f}_G + \mathcal{K}(\mathbf{f}^{n}),  \quad n = 1, 2, \cdots
	\label{iterative_scheme_flux}
\end{equation}
where $$\mathcal{K}(\mathbf{f}) = \sum\limits_{i = 1}^K\frac{\mathbf{e}_i^T\left( H \mathbf{f} + b \, 1-\mathbf{g} \right)\mathbf{e}_i^TH \mathbf{f}}{\mathbf{e}_i^T (H \mathbf{f} + b \mathbf{1})} H^+ \mathbf{e}_i. $$
	To emphasize that our non-convex optimization model is based on the use of a KL data fitting (KL) term and a non-convex (NC) regularization term, we designate our approach as KL-NC that has been enhanced with post-processing and a refined estimation of flux. We summarize the different steps of our proposed method for the rotating PSF problem in \Cref{alg:whole_process}.

\begin{algorithm}[t]
\begin{algorithmic}[1]
\Require $\mathcal{X}^0\in \mathds{R}^{m\times n \times d}$ and $G\in  \mathds{R}^{m \times n}$.
\Ensure The locations of point sources with the corresponding flux  $\left\{(x_i^\ast, y_i^\ast, z_i^\ast, \mathbf{f}_i^\ast)\right\}$.
    \State Solve the non-convex optimization problem by \Cref{alg:outer} and get the minimizer $\mathcal{X}^*$;
    \State Do the post-processing by \Cref{alg:post-procesing} to get the locations of estimated point sources   $\left\{(x_i^\ast, y_i^\ast, z_i^\ast)\right\}$;
    \State Estimate the fluxes by the iterative scheme \eqref{iterative_scheme_flux} and get $\mathbf{f}^\ast$.
\end{algorithmic}
\caption{KL-NC with post-processing and estimation of the flux. }
\label{alg:whole_process}
\end{algorithm}

\section{Numerical experiments}\label{sec:numerical}

In this section we apply our optimization  approach to solving  simulated rotating PSF problems for point source localization and compare it to some other optimization methods. The codes of our algorithm and the others with which we compared our method were written in $\mathrm{MATLAB \ 9.0 \ (R2016a)}, $ and all the numerical experiments were conducted on a typical personal computer with a standard CPU (Intel i7-6700, 3.4GHz).

The fidelity of localization is assessed in terms of the {\bf recall rate}, defined as { \it  the ratio of the number of identified true positive point sources over the number of true positive point sources,} and the {\bf precision rate}, defined as {\it the ratio of the number of identified true positive point sources over the number of all point sources obtained by the algorithm}; see \cite{Book_SR_micro2017}.

To distinguish  true positives from false positives from the estimated point sources, we need to determine the minimum total distance between the estimated point sources and true point sources. Here all 2D simulated observed images are described by 96-by-96 matrices. We set the number of zones of the spiral phase mask responsible for the rotating PSF at $L=7$  and the aperture-plane side length as 4 which sets the pixel resolution in the 2D image (FFT) plane as 1/4 in units of $\lambda z_I/R$.
The dictionary corresponding to our discretized 3D space contains 21 slices in the axial direction, with the corresponding values of the defocus parameter, $\zeta$, distributed uniformly over the range, $[-21, \  21]$.
According to the Abbe-Rayleigh resolution criterion,  two point sources that are within $(1/2)\lambda z_I/R$ of each other and lying in the same transverse plane cannot be separated in the limit of low intensities.
In view of this criterion and our choice of the aperture-plane side length and if we assume conservatively that our algorithm does not yield any significant superresolution, we must regard two point sources that  are within 2 image pixel units of each other as a single point source.
Analogously, two sources along the same line of sight ({\it i.e.,} with the same $x,y$ coordinates) that are axially separated from each other within a single unit of $\zeta$  must also be regarded as a single point source.

As for real problems, our simulation does not assume that the point sources are on the grid points. Rather, a number of point sources are randomly generated in a 3D continuous image space with certain fluxes. We consider a variety of source densities, from 5 point sources to 40 point sources in the same size space.  For each density, we randomly generate 20 observed images and use them for training the parameters in our algorithm, and then test 50 simulated images with the well-selected parameters. The number of photons emitted by each point source follows a Poisson distribution with mean of 2000 photons. Instead of adding Poisson noise as additive noise as in \cite{Rice2016generalized}, we apply  data-dependent Poisson noise by using the MATLAB command
  \begin{equation*}
  	\verb|G = poissrnd(I0+b)|,
  \end{equation*}
where $\verb|I0|$ is the 2D original image formed by  adding all the images of the point sources, and  $\verb|b|$ is the background noise which we set to a typical value 5. Here, \verb|poissrnd| is the $\mathrm{MATLAB}$ command whose input is the mean of the Poisson distribution.

\subsection{3D localizations for low and high density cases}
In this subsection, we will show some 3D localizations results in low and high density cases (15 point sources and 30 point sources); see \Cref{fig:15- 30}.

\begin{figure}[htbp]
\centering
\subfloat[Observed image]{\includegraphics[width=0.46\textwidth]{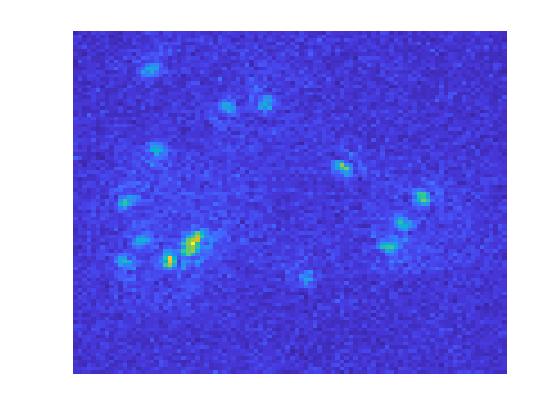}}
\subfloat[Observed image]{\includegraphics[width=0.46\textwidth]{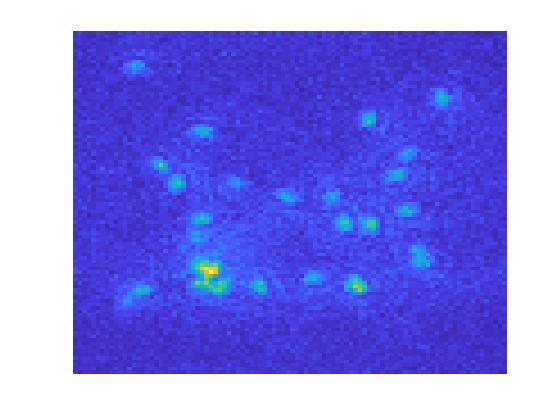}}\\
\subfloat[Estimated locations in 2D]{\includegraphics[width=0.46\textwidth]{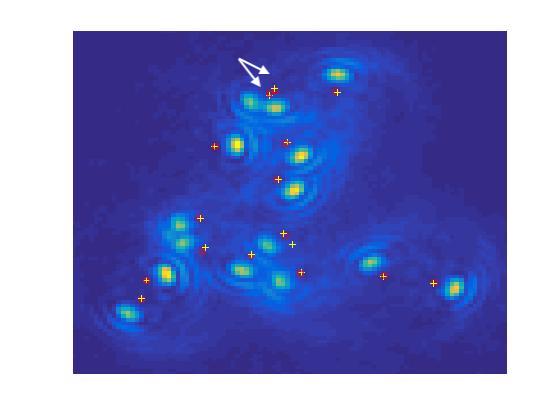}}
\subfloat[Estimated locations in 2D]{\includegraphics[width=0.46\textwidth]{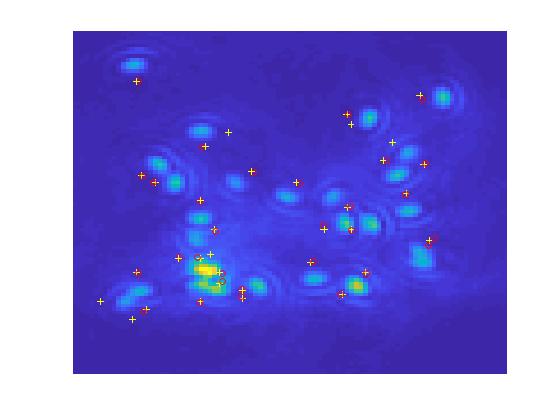}} \\
\subfloat[Estimated locations in 3D]{\includegraphics[width=0.46\textwidth]{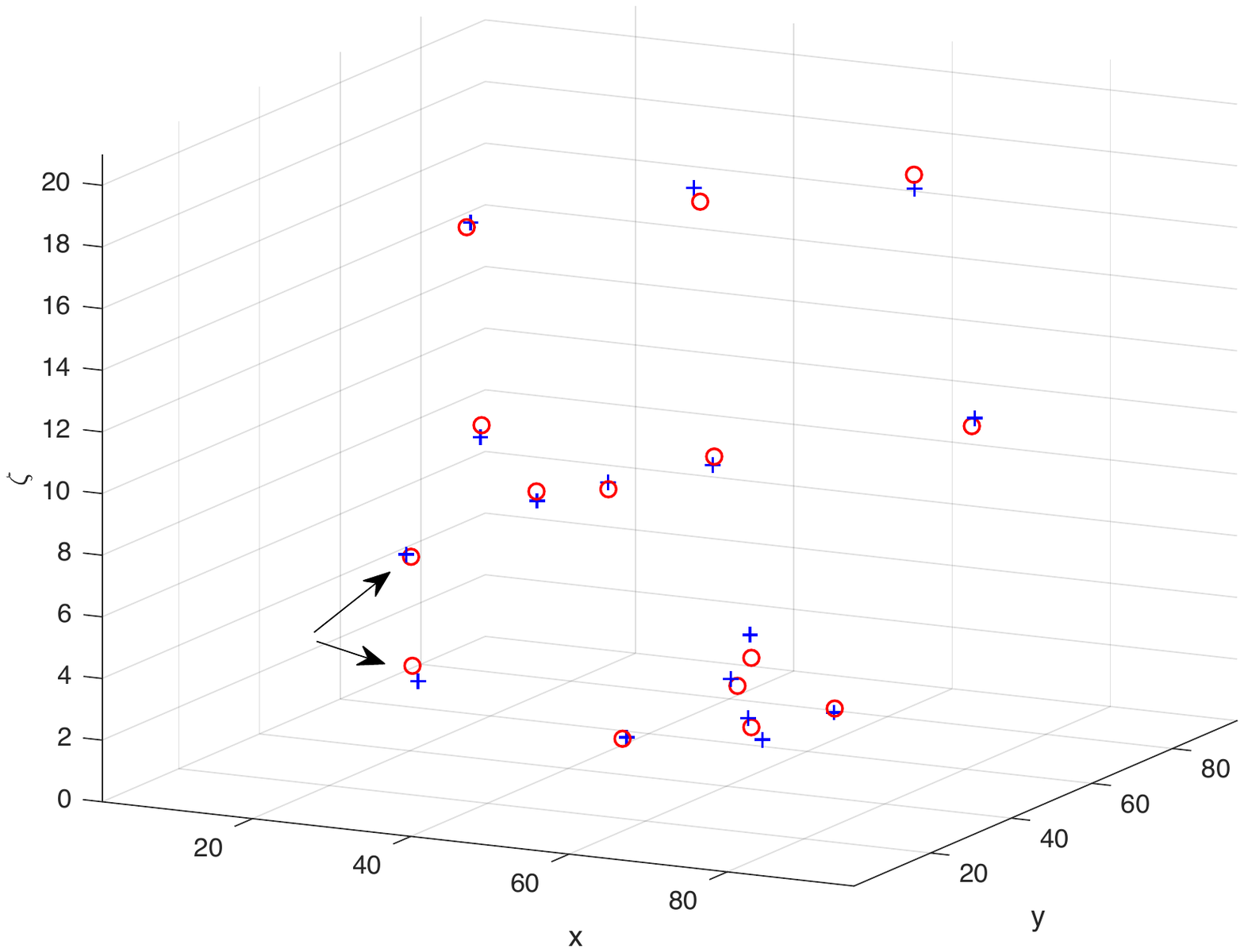}}
\subfloat[Estimated locations in 3D]{\includegraphics[width=0.46\textwidth]{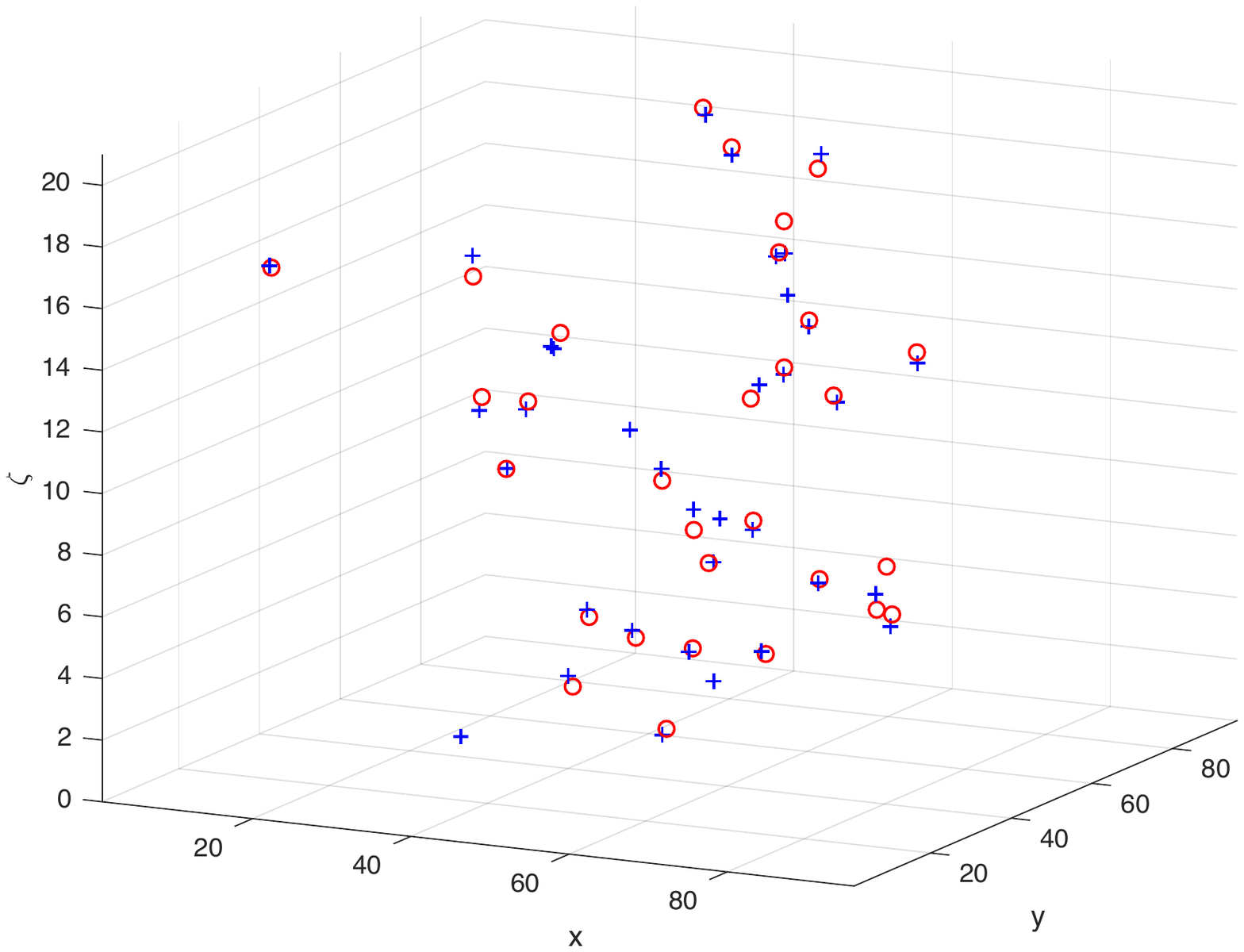}}
\caption{3D localizations for low and high density cases: (a), (c) and  (e) are the results for the 15 point sources case; (b), (d) and  (f) are the results for the 30 point sources case. ``o'' is the ground truth point source and ``+'' is the estimated point source. }\label{fig:15- 30}
\end{figure}

Note that our 3D localization is estimated very well for 15 point sources. In this example, all the true positives are identified and there is only one false positive. From \Cref{fig:15- 30}(c) and (e), we see that there are two true point sources that are close in their transverse coordinates,
but well-separated in the axial direction; {see the arrows in \Cref{fig:15- 30}(c),(e)}. In the 2D original image (without noise) in \Cref{fig:15- 30}(c), {the rotating PSF images of those two point sources are non-overlapping} and our approach can well identify both two point sources.

\Cref{fig:15- 30}(b), (d) and (f) show the high density (30 point sources) case with {many overlapping rotating PSF images of the respective point sources}. Such image overlap in the presence of Poisson noise makes the problem difficult. The number of point sources is not easily obtained by observation. In this specific case, our algorithm still identifies all the true point sources correctly, but produces 6 false positives. From \Cref{fig:15- 30}(d), we can see that these false positives come from {rather substantial} PSF overlapping.

Recall that the solution of our non-convex optimization problem involves outer and inner iterations. {The stopping criteria for inner iteration is set as the relative error $\frac{\|\mathcal{U}_1^{t+1}-\mathcal{U}_1^{t}\|_F}{\|\mathcal{U}_1^{t}\|_F}$ less than a tolerant small number. We also set the  maximum number of outer iterations as 2 and maximum number of inner iterations as 400.} From \Cref{fig:out_iteration,fig:inner_iteration}, we can see how and why we chose these two maximum iteration numbers. Let us denote the number of maximum outer iterations and inner iterations as $\mathrm{Max_{out}}$ and $\mathrm{Max_{in}}$ respectively. In \Cref{fig:out_iteration}, we fixed $\mathrm{Max_{in}}$ as 400 and tried different $\mathrm{Max_{out}}$. As we see, the recall rate, whether with or without post-processing, changes little. The precision rate increases, however, when we change $\mathrm{Max_{out}}$ from 1 to 2, but it changes considerably less when changing $\mathrm{Max_{out}}$ from 2 to 3, while the computational time increases linearly with $\mathrm{Max_{out}}$. In \Cref{fig:inner_iteration}, on the other hand, we fixed $\mathrm{Max_{out}}$ as 2 and changed $\mathrm{Max_{in}}$. Here $\mathrm{Max_{in}}=400$  gives the best results with a relatively low time cost.  In both cases, the precision rate can be improved significantly with post-processing.

\begin{figure}[htp]
\centering
	\resizebox{0.8\textwidth}{0.4\textheight}{\includegraphics{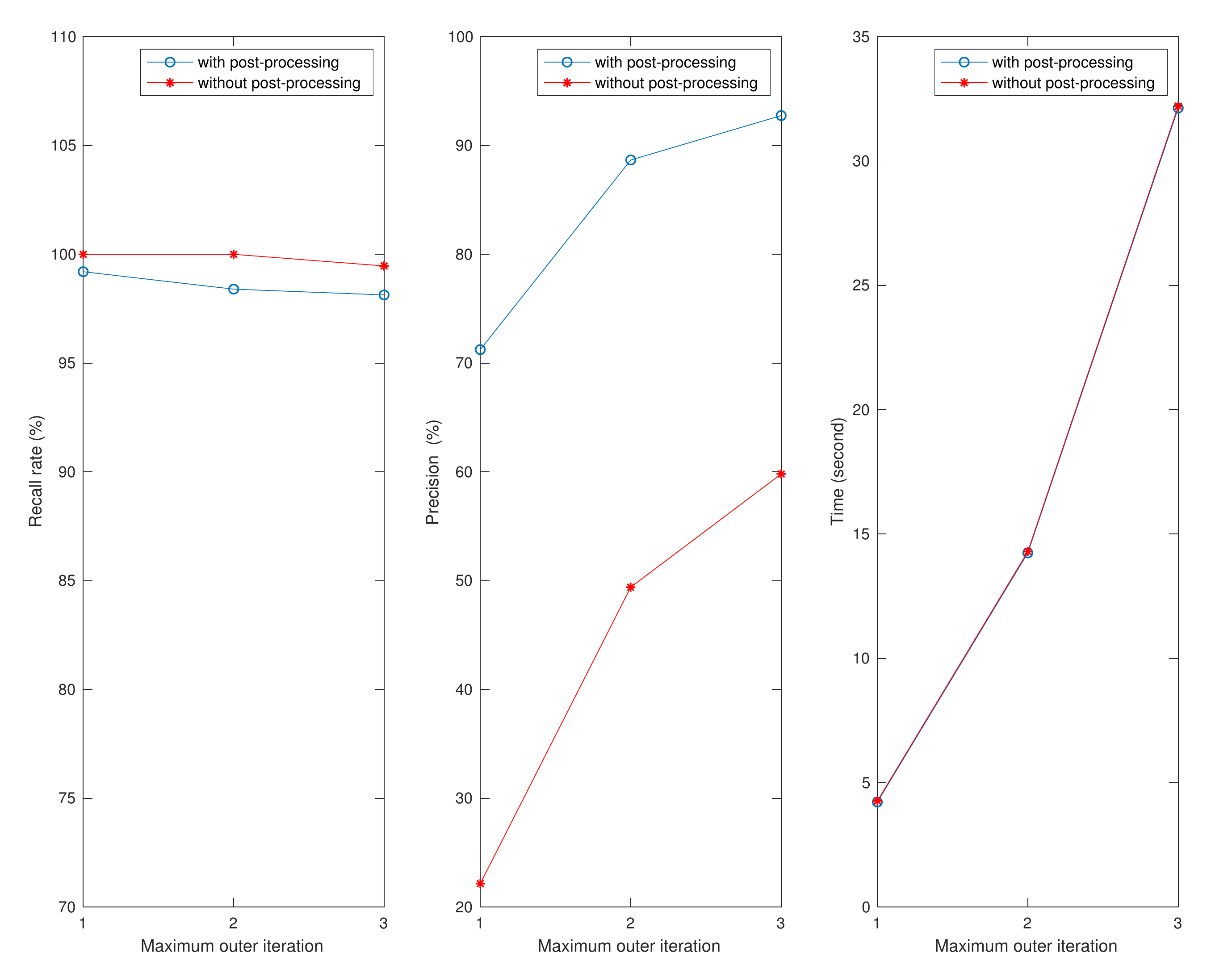}}
\caption{Effects of the maximum number of outer iteration.  }\label{fig:out_iteration}
\end{figure}
\begin{figure}[htp]
\centering
	\resizebox{0.8\textwidth}{0.4\textheight}{\includegraphics{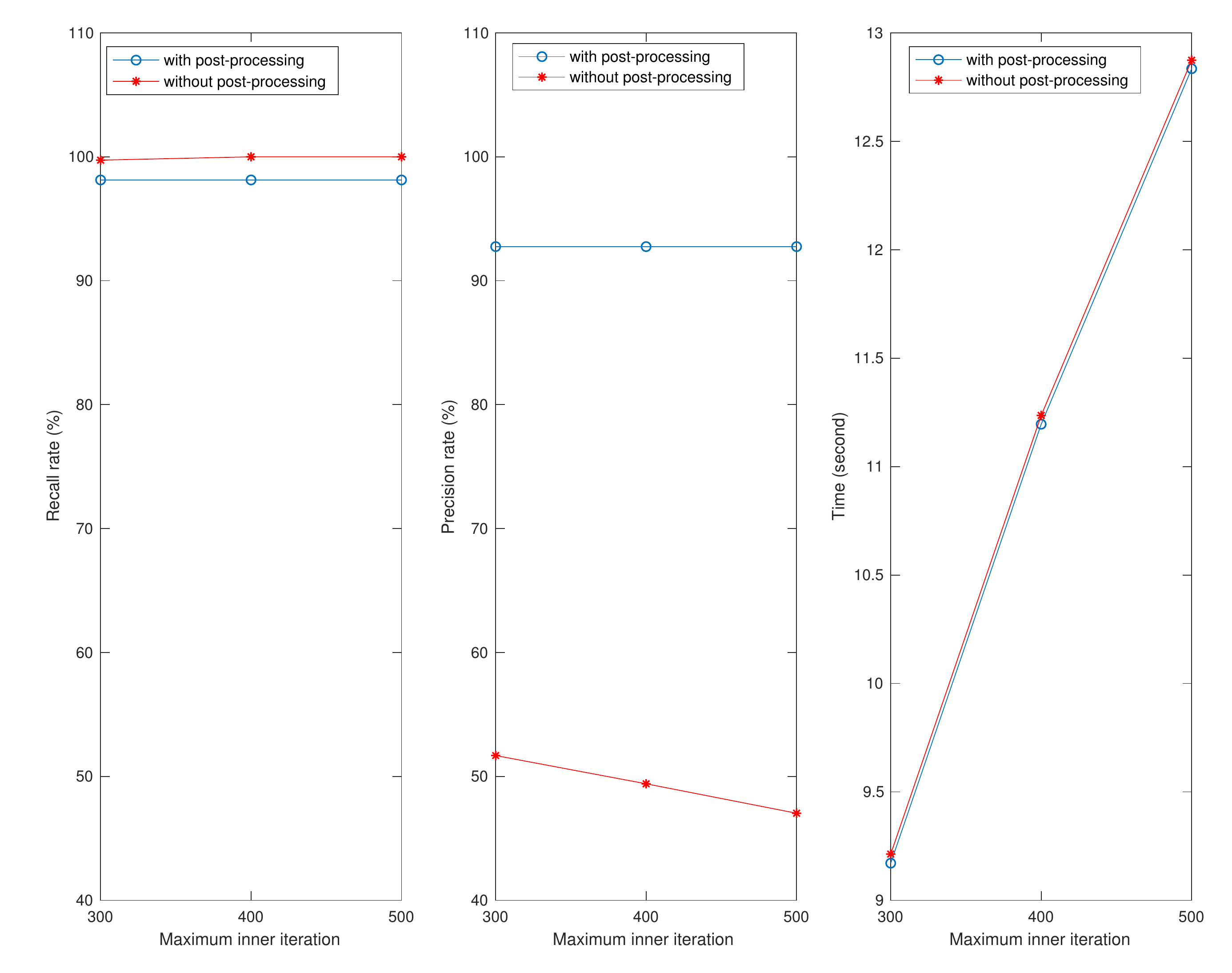}}
\caption{Effects of the maximum number of inner iteration.  }\label{fig:inner_iteration}
\end{figure}

Here, we consider the practical convergence of our algorithm.
  That is to check
  the differences between the auxiliary variables ($\mathcal{U}_0$ and $\mathcal{U}_1$) and their corresponding substituted variables ($\mathcal{A}\ast \mathcal{X}$ and $\mathcal{X}$), i.e., the value of $\|\mathcal{U}_0-\mathcal{A}\ast \mathcal{X}\|_F$ and
$\|\mathcal{U}_1-\mathcal{X}\|_F$  in each iteration. Noted that the maximum number of inner and outer iterations are 400 and 2 respectively.  We plot the values of these two terms for each iteration in \Cref{fig:converage_u0,fig:converage_u1}.  {Both terms decrease  iteratively. Note that the values of $\|\mathcal{U}_0\|_F$ and $\|\mathcal{A}\ast \mathcal{X}\|_F$ are about 2300 and the values of $\|\mathcal{U}_1\|_F$ and $\|\mathcal{X}\|_F$ are about 260.} Therefore, the differences between auxiliary variables and their substituted variables are small, which shows the convergence numerically.

\begin{figure}[htbp]
\centering
\subfloat[In the 1st outer iteration]{\includegraphics[width=0.49\textwidth, height=0.28\textheight]{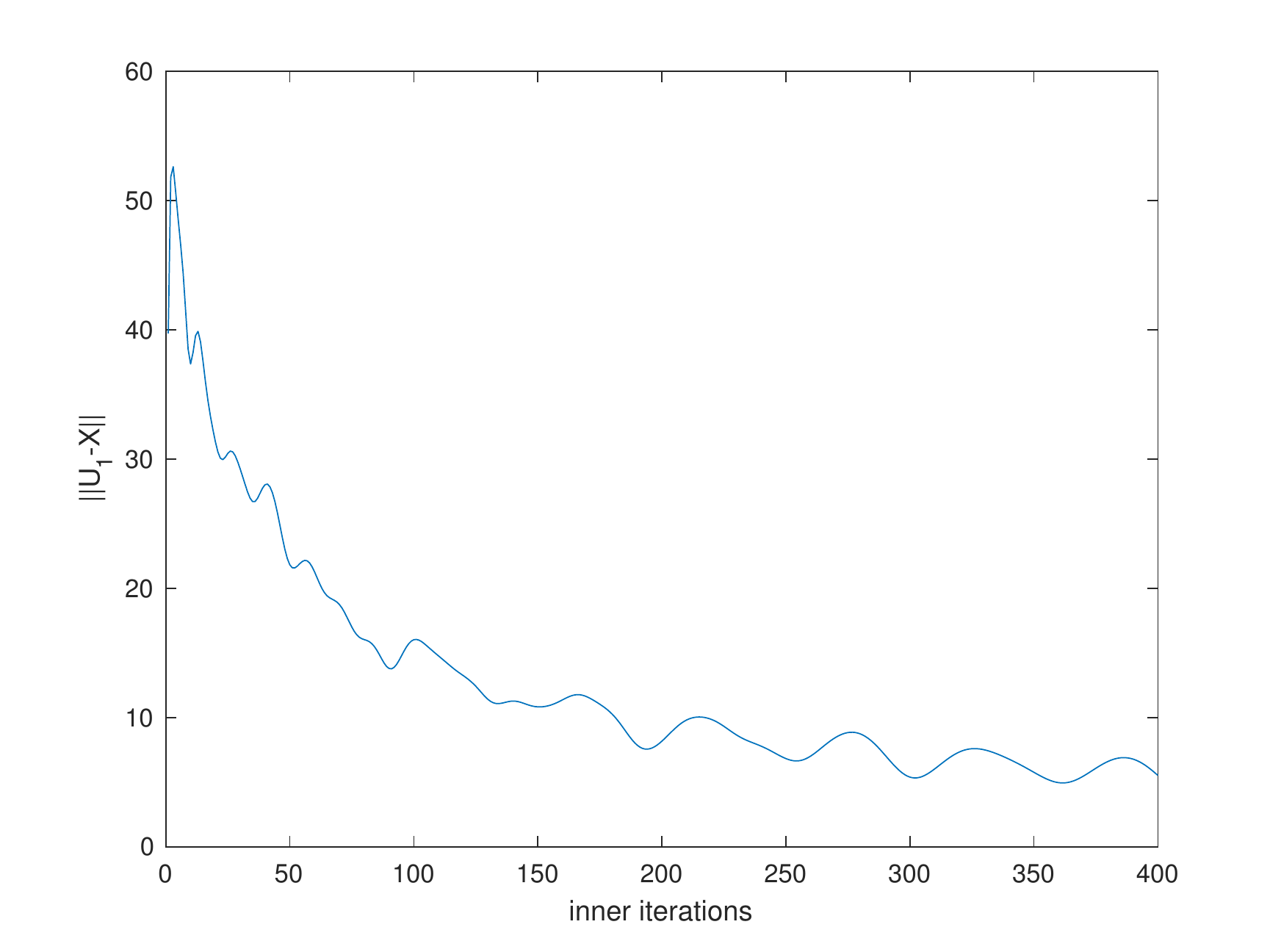}}
\subfloat[In the 2nd outer iteration]{\includegraphics[width=0.49\textwidth, height=0.28\textheight]{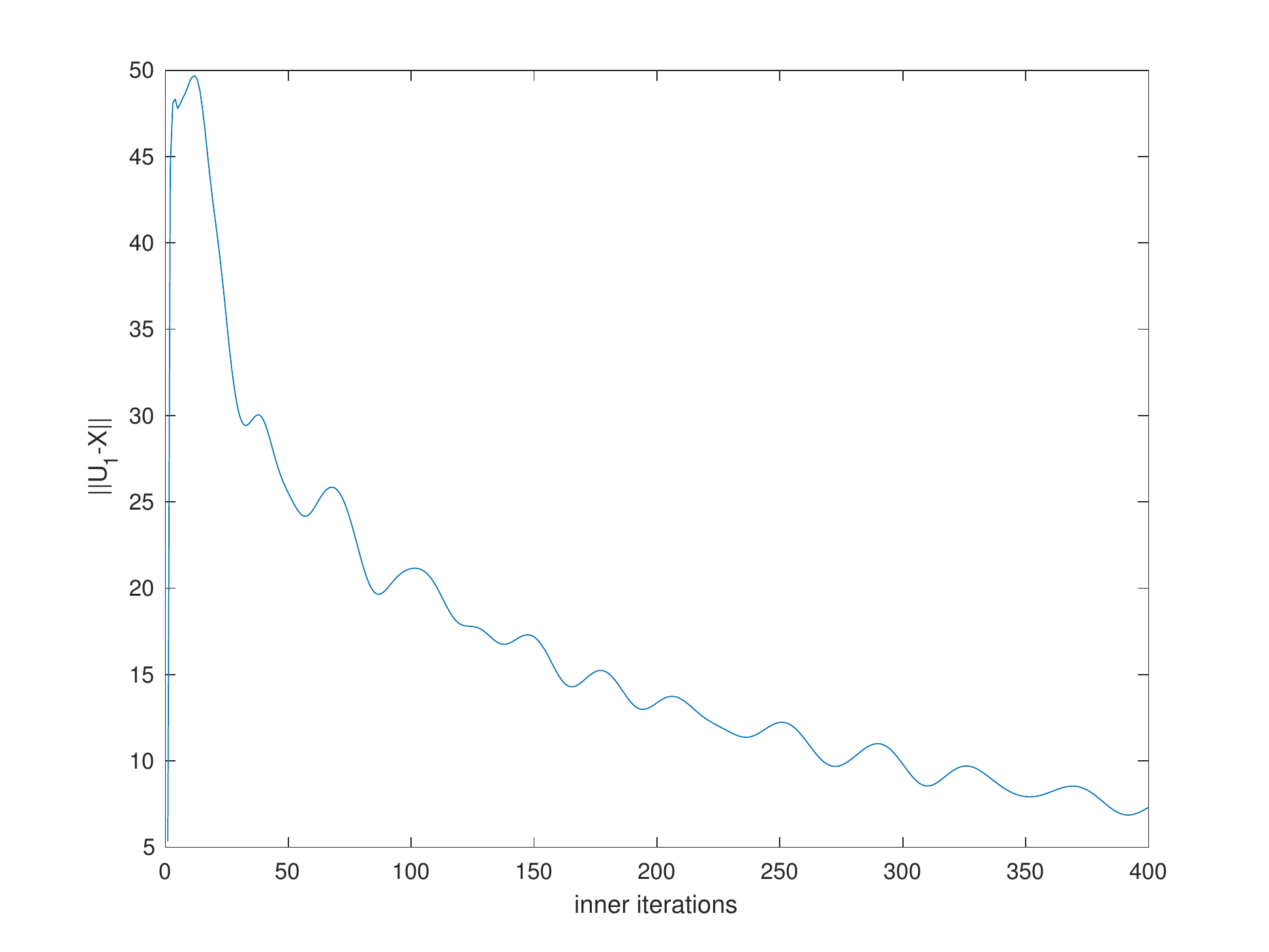}}\\
\caption{$\|\mathcal{U}_1-\mathcal{X}\|_F$ in each iteration.  }
\label{fig:converage_u1}
\end{figure}

\begin{figure}[htbp]
\centering
\subfloat[In the 1st outer iteration]{\includegraphics[width=0.49\textwidth, height=0.28\textheight]{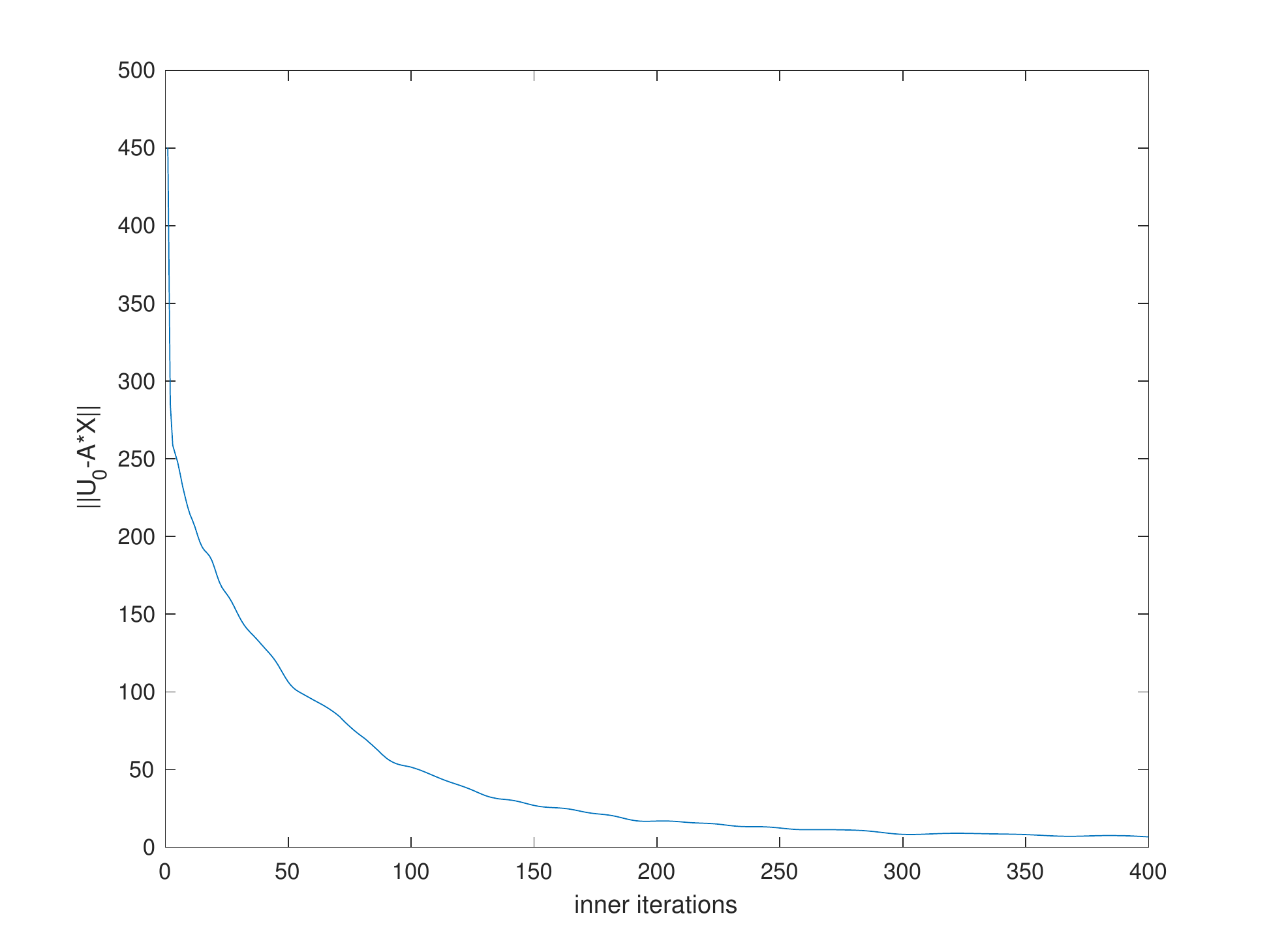}}
\subfloat[In the 2nd outer iteration]{\includegraphics[width=0.49\textwidth, height=0.28\textheight]{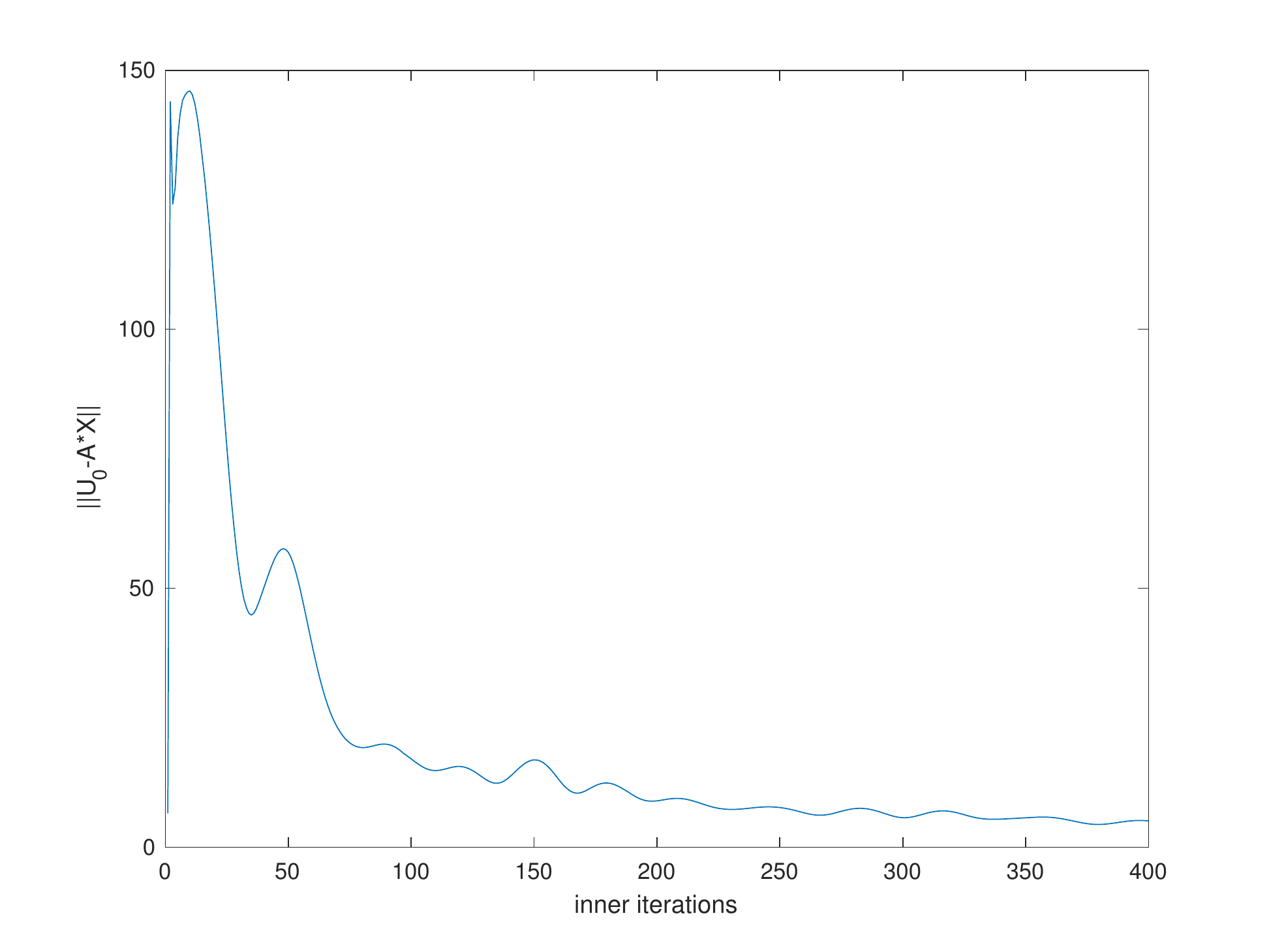}}\\
\caption{$\|\mathcal{U}_0-\mathcal{A}\ast \mathcal{X}\|_F$ in each iteration.  }
\label{fig:converage_u0}
\end{figure}

{
At the end of this subsection, we consider {the stability of our algorithm under PSF errors and bias. As an example of such an error, which might arise from imperfections of the phase-mask fabrication process, we numerically add a Gaussian phase randomness at the level of a fraction of a wave to the spiral phase mask.} That is to say, $A_{\zeta}(\mathbf{s} )$ in \eqref{equ:A} is changed to 
\begin{equation*}
	\hat{A}_{\zeta}(\mathbf{s}) = {1\over \pi}\left|\int P(\mathbf{u} )\mathrm{exp} \left[ \iota( 2\pi \mathbf{u}\cdot\mathbf{s} +  \zeta u^2 - \psi(\mathbf{u}) + \sigma \mathcal{N})  \right] d \mathbf{u} \right|^2,
\end{equation*}
where $\sigma\mathcal{N}$ {represents a Gaussian noise term} with standard deviation $\sigma$ chosen from 1/40th to 1/10th of a wave in the spiral phase, {\it i.e.} $\sigma$ is set between $ \frac{2\pi}{40}$ and $\frac{2\pi}{10}$ {radians}.
 We apply the Gaussian noise by using the MATLAB command ``randn'' and  consider the case of 15 point sources by  calculating the average of recall and precision rate for 50 observed images randomly generated from the distribution of point sources.  {We compare the results with and without any PSF error. See \Cref{fig:stability}. The recall rate is stable for all these noise levels.  When the noise level, $\sigma$, is less than 0.4 radian, ({\it i.e.}, about 1/25th wave of spiral phase), the precision rate is within 3\% of the noise-less case. But when $\sigma$ is  larger than 0.4 radian, the precision rate drops by about 8\%.  These results show good stability of our algorithm under modest PSF phase errors.} 
\begin{figure}[htbp]
\centering
\includegraphics[width=0.7\textwidth]{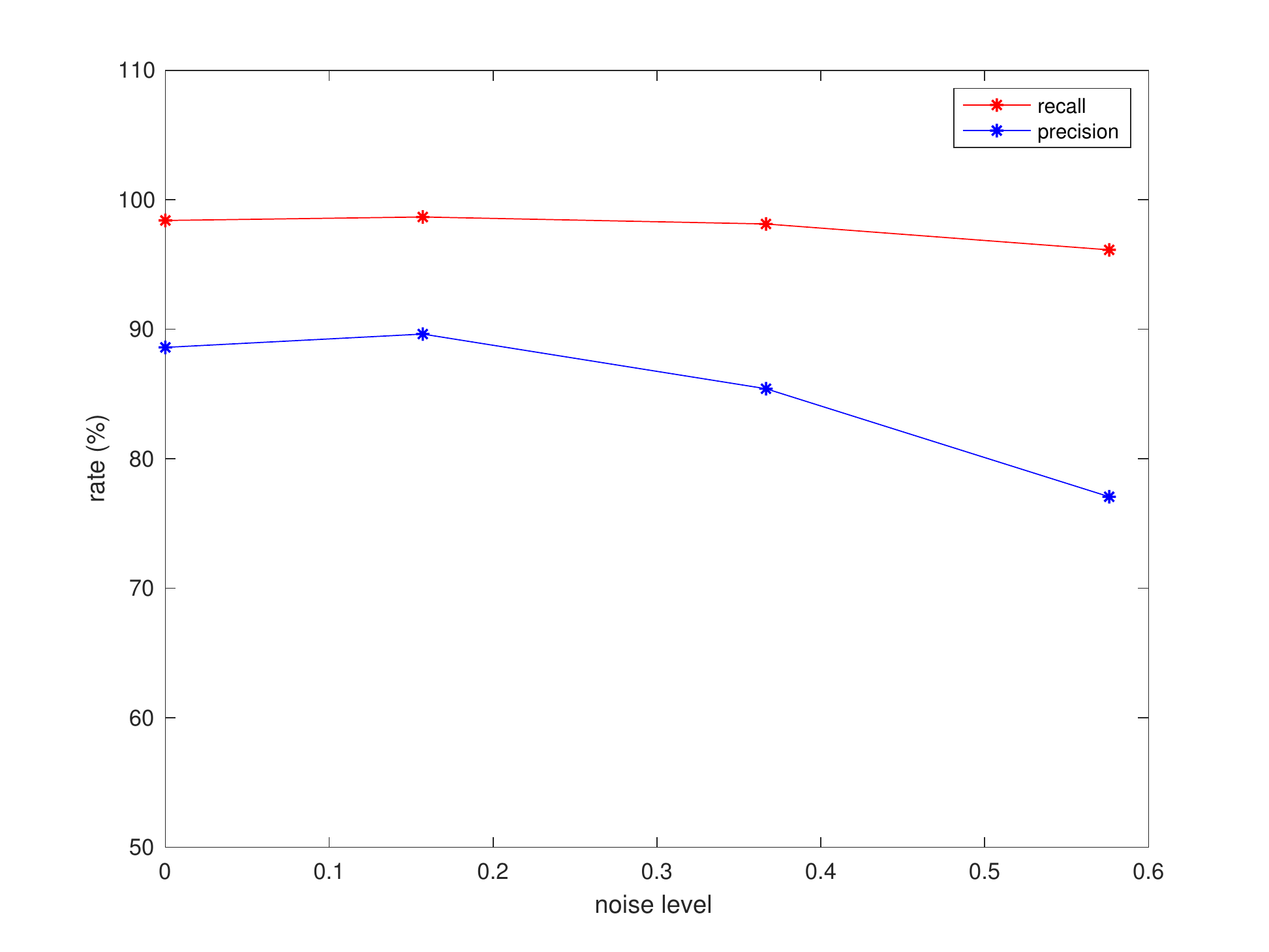}
\caption{Recall and precision rates under different levels of errors in the degraded PSF. }
\label{fig:stability}
\end{figure} 
}

\subsection{Comparison of Algorithm KL-NC with other algorithms}
In this subsection we compare our approach with three other optimization algorithms: KL-$\ell_1$ (KL data fitting with $\ell_1$ regularization model);
 $\ell_2$-$\ell_1$ (least squares fitting term with $\ell_1$ regularization model) and $\ell_2$-NC (least squares fitting term with non-convex regularization model). For all these models, we do the same post-processing and estimation of flux values after solving the corresponding optimization problem.

 In KL-$\ell_1$, we need to solve the following optimization problem
 \begin{equation*}
 	\min\limits_{\mathcal{X}\geq 0 } \left\{  D_{KL}(\mathcal{T}(\mathcal{A} \ast \mathcal{X})+b \,1, G) +\mu\|\mathcal{X}\|_1 \right\}.
 \end{equation*}
 We use ADMM by introducing two auxiliary variables $\mathcal{U}_0 = \mathcal{A}\ast\mathcal{X}$  and $\mathcal{U}_1 = \mathcal{X}$. We  get a similar augmented Lagrangrian function with $w_{ijk}^l$ replaced by $\mu$ in \eqref{equ:aug}. Therefore, we use \Cref{them:closed_form_soln} to solve this problem.

 In $\ell_2$-$\ell_1$, we need to solve the following optimization problem
 \begin{equation}
 \label{equ:l2-l1}
 	\min\limits_{\mathcal{X}\geq 0 }\left\{ \frac{1}{2} \|\mathcal{T}(\mathcal{A} \ast \mathcal{X})+b \,1- G\|_F^2 +\mu\|\mathcal{X}\|_1\right\}.
 \end{equation}
The problem is similar to the optimization problem in \cite{Rice2016generalized}, except that we include
the background 
which is estimated easily by  signal processing. We can therefore still use ADMM for solving \eqref{equ:l2-l1}.

In $\ell_2$-NC, we need to solve the following optimization problem
 \begin{equation}
 \label{equ:l2-nc}
 	\min\limits_{\mathcal{X}\geq 0 } \left\{  \frac{1}{2} \left\|\mathcal{T}(\mathcal{A} \ast \mathcal{X})+b \,1- G \right\|_F^2  +\mu\mathcal{R}(\mathcal{X})\right\},
 \end{equation}
where $\mathcal{R}(\mathcal{X})$ is the same as KL-NC method. We see that  IRL1 can also work well for solving \eqref{equ:l2-nc} with the $\ell_1$ weighted optimization as a subproblem which we solve by ADMM since it is similar to $\ell_2$-$\ell_1$.

 Both the initial guesses of $\mathcal{X}$ and $\mathcal{U}_0$ are set as 0 for all these methods. In order to do the comparison, we plot the solutions for the four models as well as the ground truth in the same space; see \Cref{fig:compare_distr_15,fig:compare_distr_30} which correspond to the cases of 15 and 30 point sources, respectively.

\begin{figure}[htbp]
\centering
\subfloat[Without post-processing ($\ell_1$ models)]{\includegraphics[width=0.49\textwidth, height=0.32\textheight]{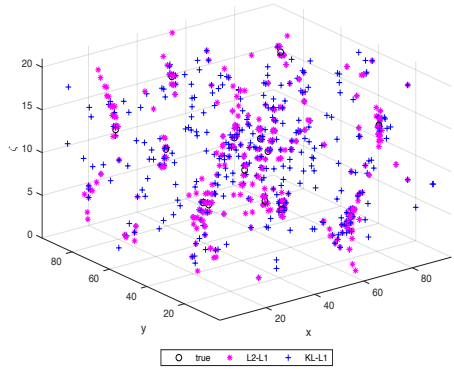}}
\subfloat[Without post-processing (non-convex model)]{\includegraphics[width=0.49\textwidth, height=0.32\textheight]{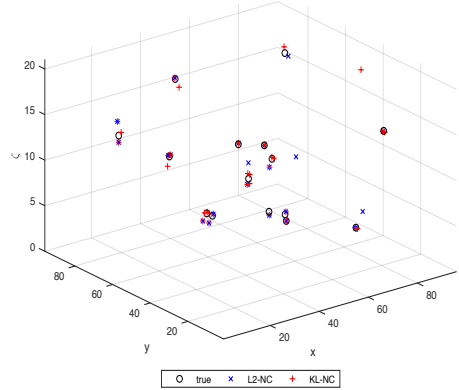}}\\
\subfloat[With post-processing ($\ell_1$ models)]{\includegraphics[width=0.49\textwidth, height=0.32\textheight]{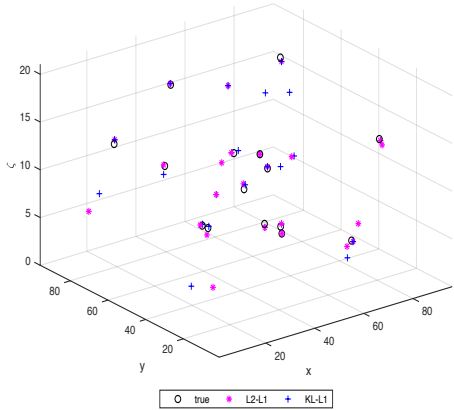}} 
\subfloat[With post-processing (non-convex model)]{\includegraphics[width=0.49\textwidth, height=0.32\textheight]{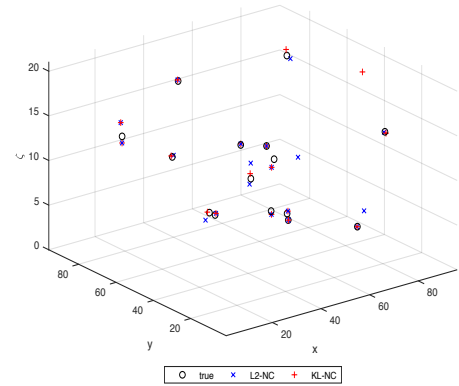}} 
\caption{Solution $\mathcal{X}$ from 4 algorithms (15 point sources). }
\label{fig:compare_distr_15}
\end{figure}
\begin{figure}[htbp]
\centering
\subfloat[Without post-processing ($\ell_1$ models)]{\includegraphics[width=0.49\textwidth, height=0.32\textheight]{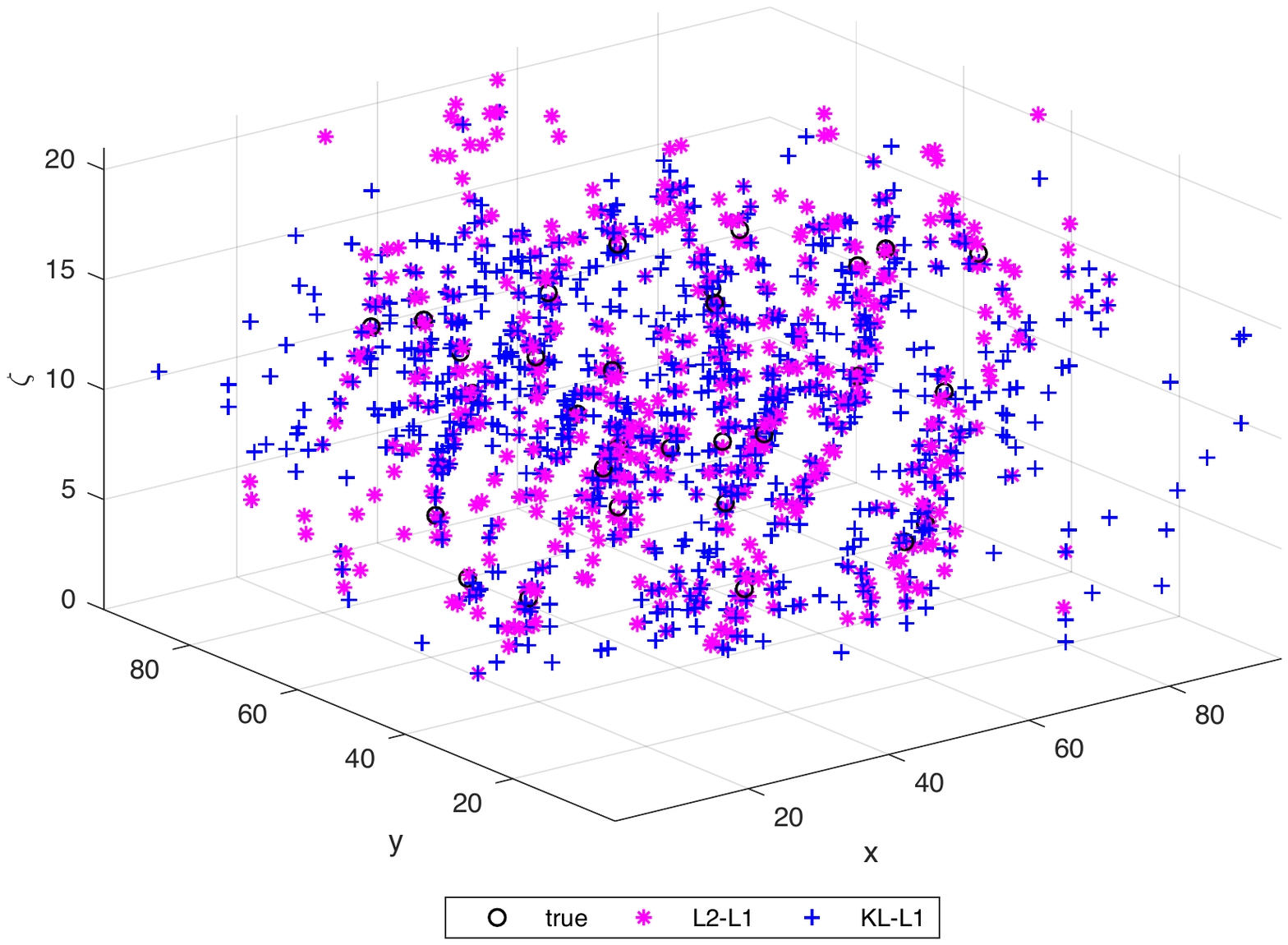}}
\subfloat[Without post-processing (non-convex model)]{\includegraphics[width=0.49\textwidth, height=0.32\textheight]{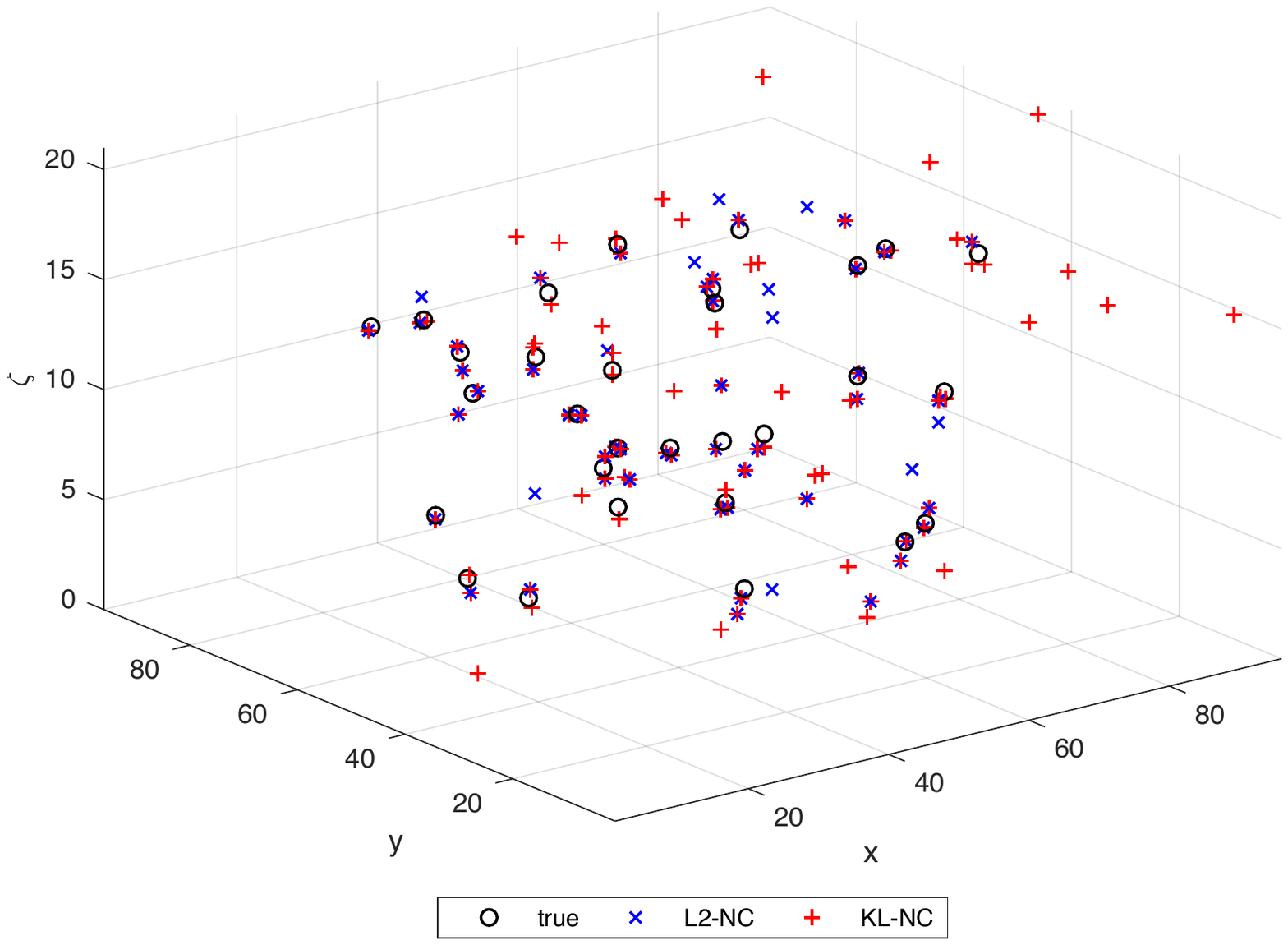}}\\
\subfloat[With post-processing ($\ell_1$ models)]{\includegraphics[width=0.49\textwidth, height=0.32\textheight]{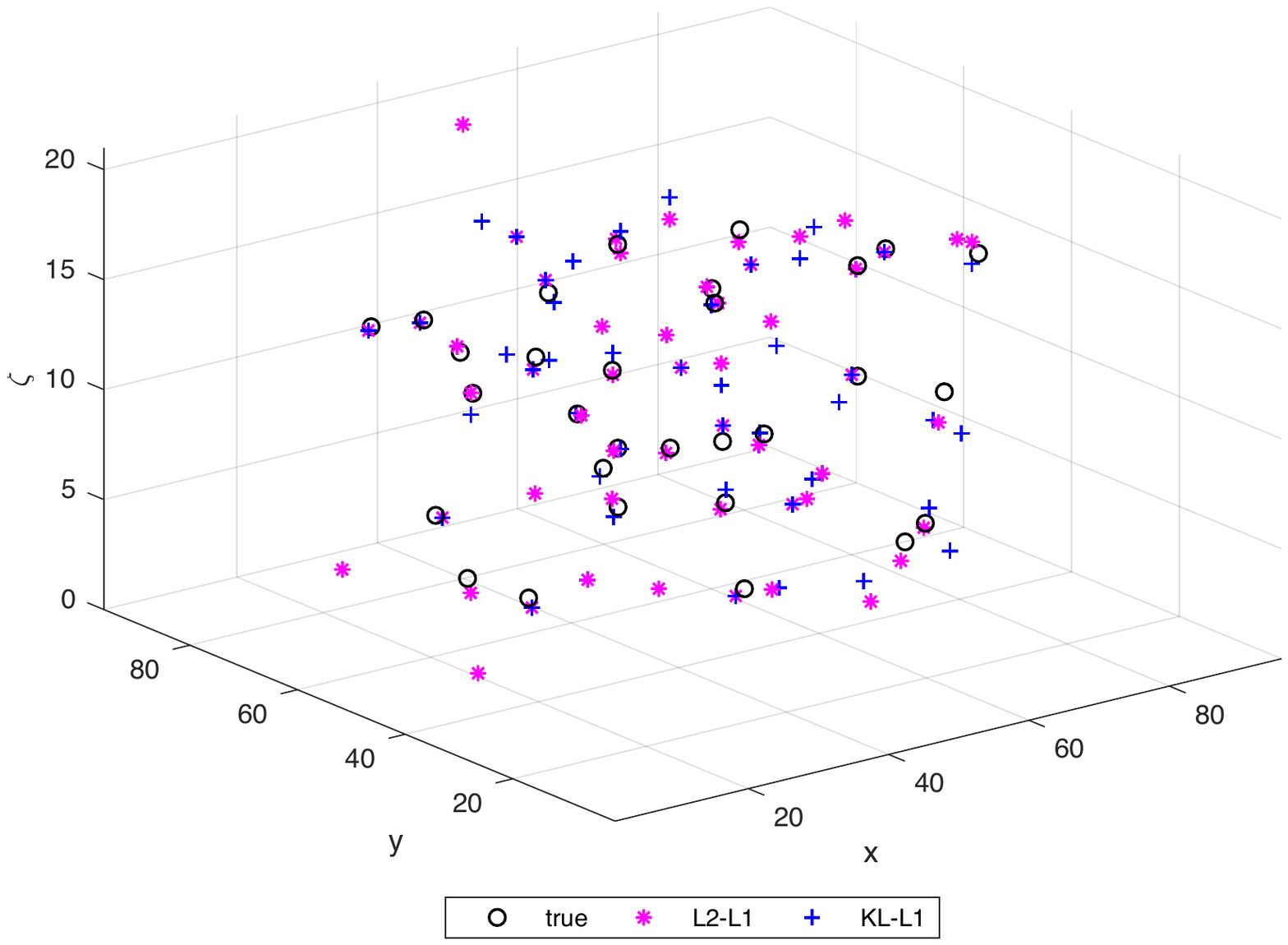}} 
\subfloat[With post-processing (non-convex model)]{\includegraphics[width=0.49\textwidth, height=0.32\textheight]{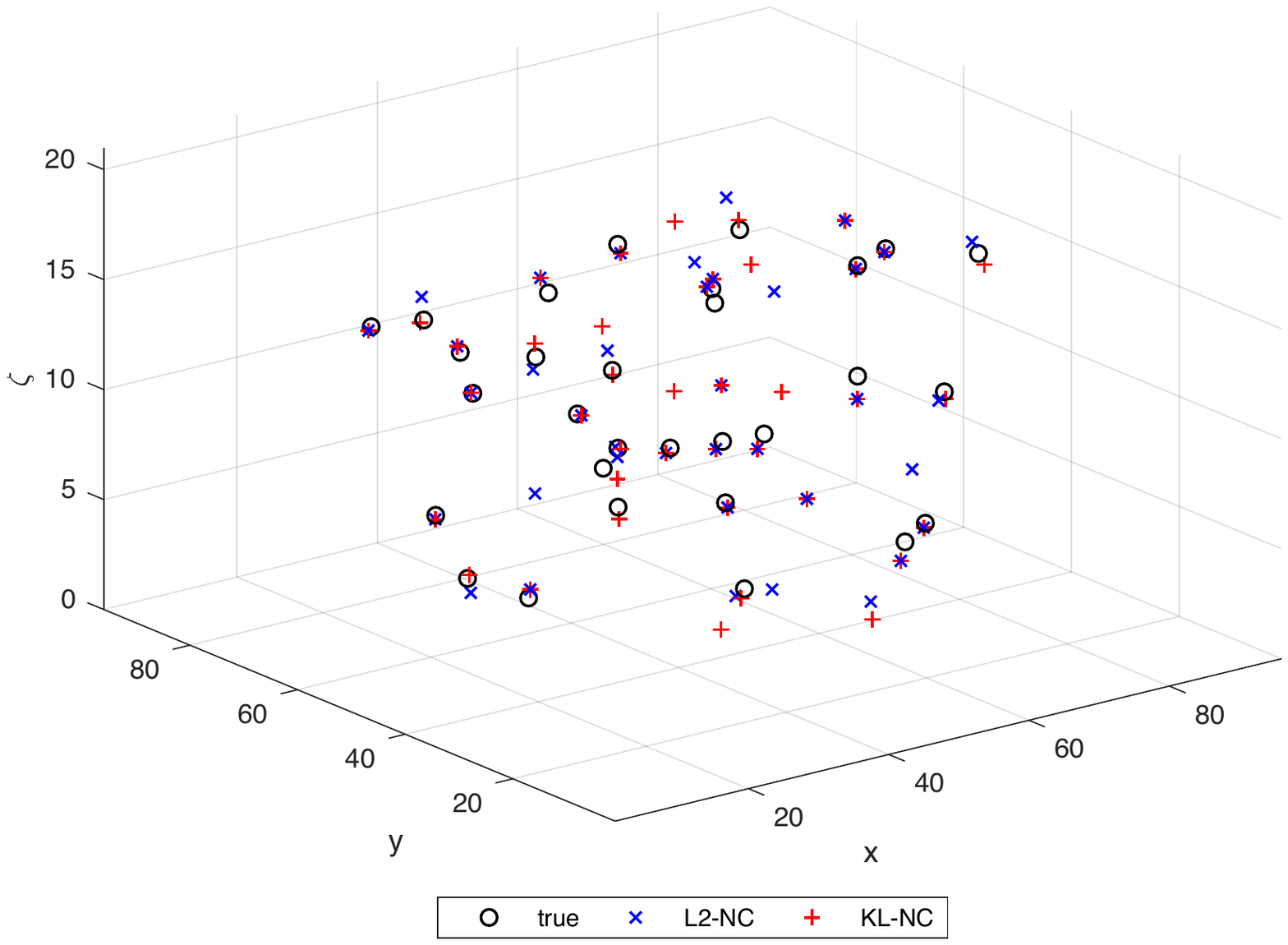}} 
\caption{Solution $\mathcal{X}$  from 4 algorithms (30 point sources). }
\label{fig:compare_distr_30}
\end{figure}

 From  \Cref{fig:compare_distr_15} and \Cref{fig:compare_distr_30}, we see the overfitting of the $\ell_1$ regularization models (KL-$\ell_1$ and $\ell_2$-$\ell_1$). Before post-processing, the solutions of these two algorithms spread out the PSFs a lot and have many false positives. After post-processing, both algorithms are improved, especially KL-$\ell_1$. However, in comparison to the non-convex regularization (KL-NC and $\ell_2$-NC), they still have many more false positives. Among the four algorithms, our approach (KL-NC) performs the best in terms of the recall and precision rates.

We also tested a number of other values of point source density, namely 5, 10, 15, 20, 30 and 40 point sources, and computed the average recall and precision rate of 50
images for each density and for each algorithm; see \Cref{Tab: with_postprocessing,Tab: without_postprocessing}. The results show the superior results  of our method in terms of both recall and precision rates. In \Cref{Tab: with_postprocessing},  the best recall and precision rates in each case are labeled by bold fonts.   As in the above discussion, our non-convex regularization tends to eliminate more false positives which increases the  precision rate. The KL data-fitting term, on the other hand, improves the recall rate as we see by comparing the results of KL-NC with $\ell_2$-NC. Before post-processing, we see that all the algorithms have low precision rates, especially  the two employing the $\ell_1$ regularization model at less than 10\%.

\begin{table}[htbp]
\begin{center}
\caption{Comparison of our KL-NC with $\ell_2$-$\ell_1$,  $\ell_2$-NC and  KL-$\ell_1$. All the results are {\bf after performing post-processing}. }
\begin{tabular}{c|cc|cc|cc|cc}
 \hline  &  \multicolumn{2}{c}{$\ell_2$-$\ell_1$} &  \multicolumn{2}{c}{$\ell_2$-NC}    &\multicolumn{2}{c}{KL-$\ell_1$} & \multicolumn{2}{c}{KL-NC}\\
\hline No.~Sources & Recall & Prec. & Recall & Prec. & Recall & Prec.  & Recall & Prec.    \\
\hline  5 & {\bf 100.00}\% & 68.91\%  & 97.60\% &  89.15\% & 98.93\% & 	58.64\% & {\bf 100.00}\% & {\bf  97.52}\%  \\
\hline 10 & {\bf 99.60}\% &  55.95\%   & 94.80\% &  83.51\% & 99.40\% &  65.24\% & { 99.40}\% & {\bf  93.69}\% \\
\hline 15 & {98.67}\% &  56.28\%  & 92.80\% &  84.77\% & {\bf 98.93}\% &  58.64\% & 98.40\% & {\bf  88.60}\%  \\
\hline 20 & 97.70\% &  56.50\%  & 95.20\% &  80.92\%   & {\bf 98.10}\% &  57.82\% & {97.70}\% & {\bf  87.49}\% \\
\hline 30  & {96.00}\% &  55.74\%  & 93.93\% &  77.77\% & {94.00}\% &  56.22\% & {\bf 96.20}\% & {\bf 79.75}\% \\
\hline 40  & 93.80\% &  52.68\%  & {\bf 95.40}\% &  59.34\% & 93.70\% &  54.29\% & {95.00}\% & {\bf 73.35}\% \\
\hline
\end{tabular}\label{Tab: with_postprocessing}
\medskip
\end{center}
\end{table}

\begin{table}[htbp]
\begin{center}
\caption{Comparison of our KL-NC with $\ell_2$-$\ell_1$,  $\ell_2$-NC and  KL-$\ell_1$.  All the results are {\bf before post-processing}.  }
\begin{tabular}{c|cc|cc|cc|cc}
\hline  &  \multicolumn{2}{c}{$\ell_2$-$\ell_1$} &  \multicolumn{2}{c}{$\ell_2$-NC}    &\multicolumn{2}{c}{KL-$\ell_1$} & \multicolumn{2}{c}{KL-NC}\\
\hline No.~Sources & Recall & Prec. & Recall & Prec. & Recall & Prec.  & Recall & Prec.    \\

\hline  5 & 100.00\% & 9.36\%  & 98.80\% &  58.97\% & 100.00\% & 3.41\% & 100.00\% & 52.27\%  \\
\hline 10  & 100.00\% &  8.48\%   & 96.80\% &  54.50\% & 100.00\% &  2.92\% & 100.00\% & 51.13\%\\
\hline 15  & 100.00\% &  4.32\%  & 96.47\% &  60.05\% & 100.00\% &  3.41\% & 100.00\% & 49.40\% \\
\hline 20  & 100.00\% &  4.41\%  & 97.70\% &  50.27\% & 100.00\% &  3.27\% & 99.30\% & 52.80\% \\
\hline 30  & 100.00\% &  4.47\%  & 98.13\% &  49.25\%  & 99.93\% &  2.71\% & 99.40\% & 35.67\%\\
\hline 40  & 100.00\% &  4.47\%  & 98.95\% &  30.27\% & 100.00\% &  3.75\% & 99.00\% & 30.31\%\\
\hline
\end{tabular}\label{Tab: without_postprocessing}
\medskip
\end{center}
\end{table}

\begin{figure}[htbp]
\centering
\subfloat[$\ell_2$-$\ell_1$]{\includegraphics[width=0.39\textwidth]{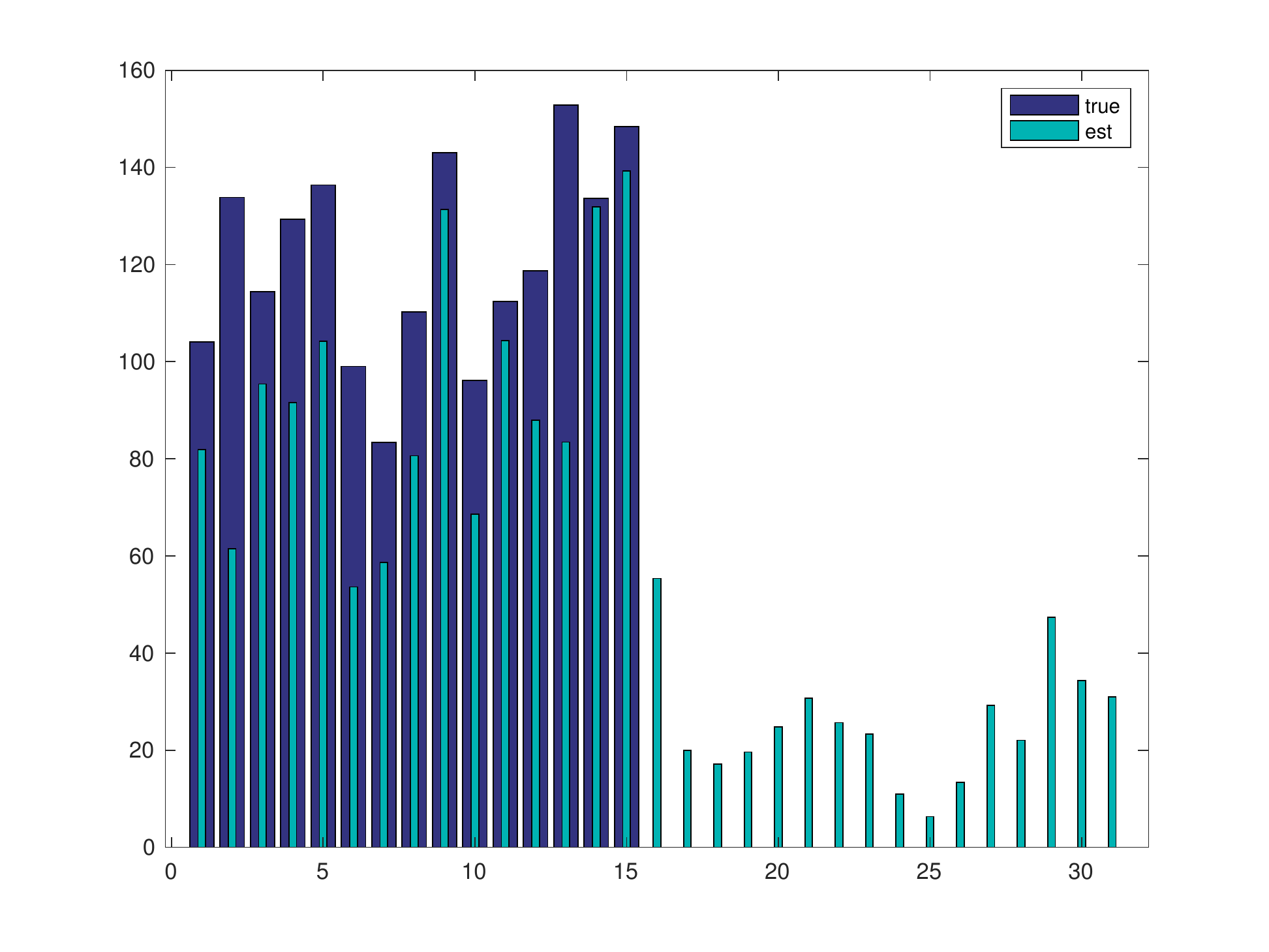}}
\hspace{0.01mm}
\subfloat[$\ell_2$-NC]{\includegraphics[width=0.39\textwidth]{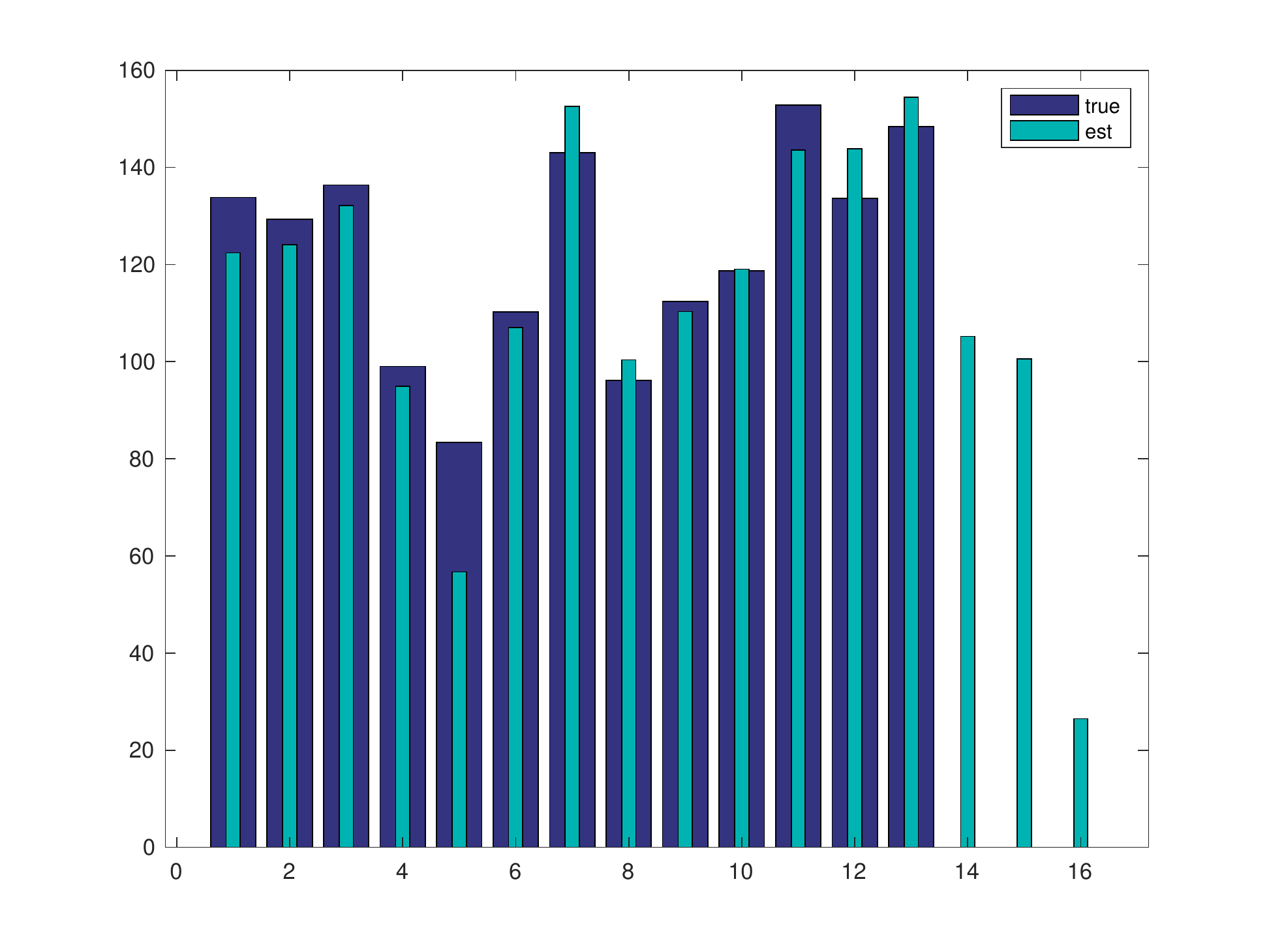}}
\hspace{0.01mm}
\subfloat[KL-$\ell_1$]{\includegraphics[width=0.39\textwidth]{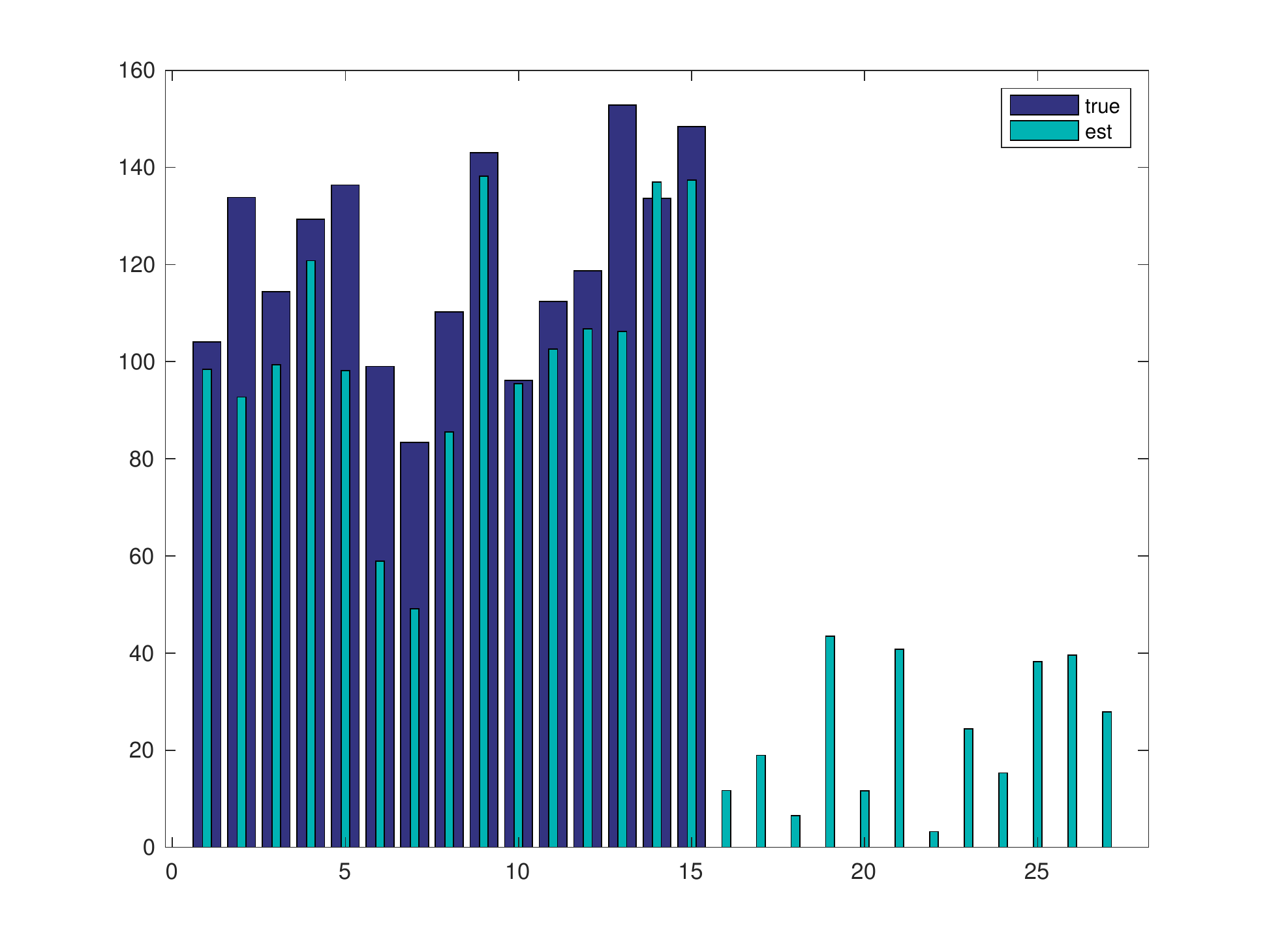}} 
\hspace{0.01mm}
\subfloat[KL-NC]{\includegraphics[width=0.39\textwidth]{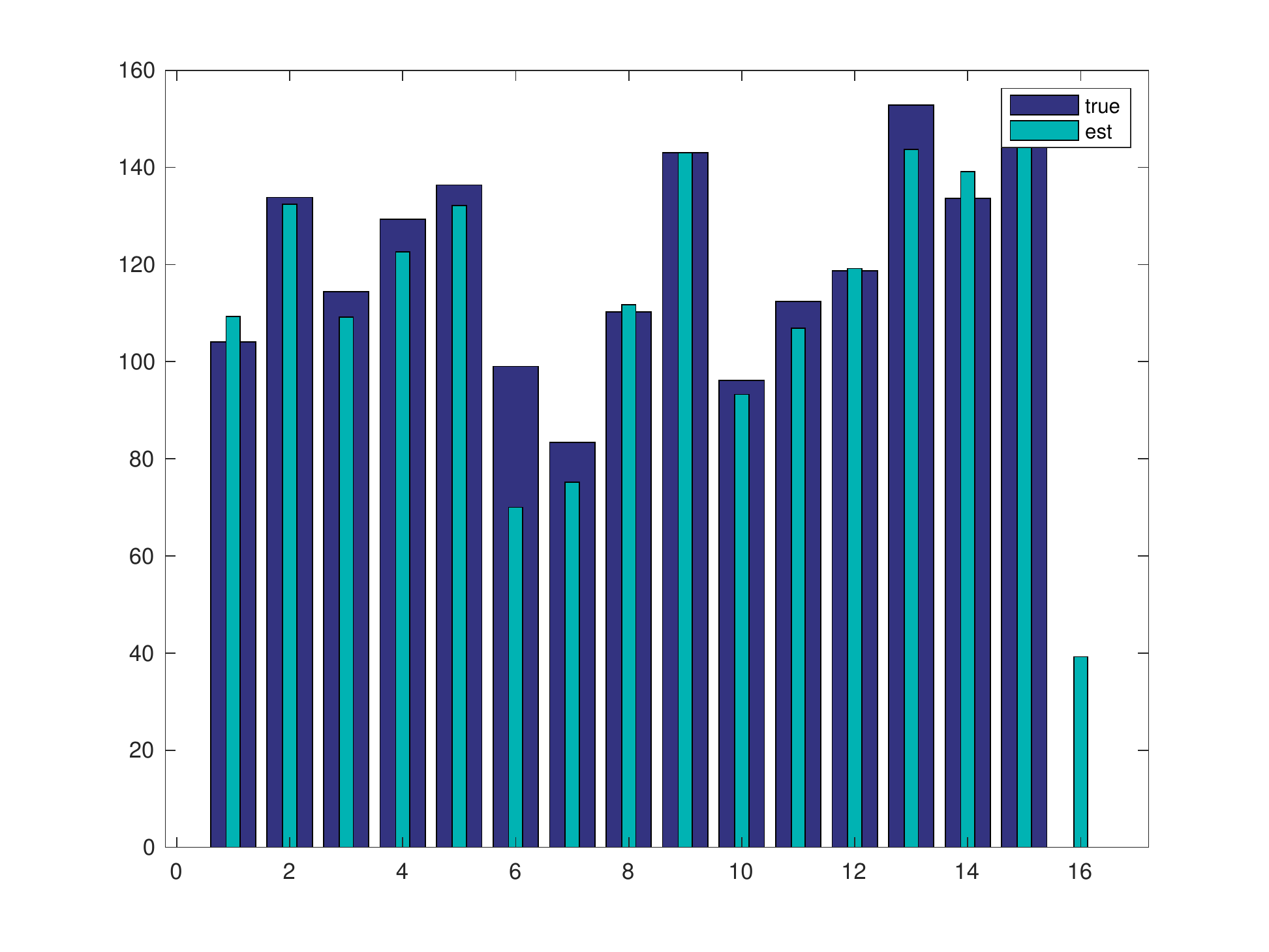}}
\caption{Estimating flux values. The bar graph with no value  ground truth part corresponds to a false positive. }\label{fig:flux}
\end{figure}

		In the following, we compare the results of the estimation of the flux in these four algorithms, considering specifically the case of 15 point sources. In \Cref{fig:flux}, we plot the fluxes of ground truth sources and the fluxes of the estimated point sources for the true positive point sources. For the false positive point sources, we only  show
the estimated fluxes.  Both $\ell_1$ models underestimate the fluxes. The rotating PSF images for false positives carry the energy away from the true positive source fluxes. In non-convex models, we also have similar observations when we have false positives. For example, in \Cref{fig:flux}(b), we see the flux on the fifth bar is  underestimated more than the others. We note that its rotating PSF image is overlapping with that of a false positive.
The more false positives an algorithm recovers the more they will spread out the intensity, leading to more underestimated fluxes for the true positives.

We also tested 50 different observed images for each density, and analyzed the relative error in the estimated flux values, which we define as
\begin{equation*}
\mathrm{error} = \frac{\mathbf{f}_{\mathrm{est}}-\mathbf{f}_{\mathrm{tru}}}{\mathbf{f}_{\mathrm{tru}}},
\end{equation*}
where $(\mathbf{f}_{\mathrm{est}}, \ \mathbf{f}_{\mathrm{tru}})$ is the pair which contains the flux of an identified true positive and the corresponsing  ground truth flux. We plot the histogram of the relative errors on these four models in \Cref{fig:15flux_per,fig:30flux_per}. We still see the advantage of KL-NC over other algorithms in this respect. The distribution of relative errors mostly lies within $[0, \ 0.1]$. For the $\ell_1$ regularization algorithms, the distribution of the relative error spreads out and there are  many cases with error higher than 0.3.  For $\ell_2$-NC, we have the similar result for the 15 point sources case; however, in higher density cases, KL-NC has better results comparing with $\ell_2$-NC.
\begin{figure}[htbp]
\centering
\subfloat[$\ell_2$-$\ell_1$]{\includegraphics[width=0.39\textwidth]{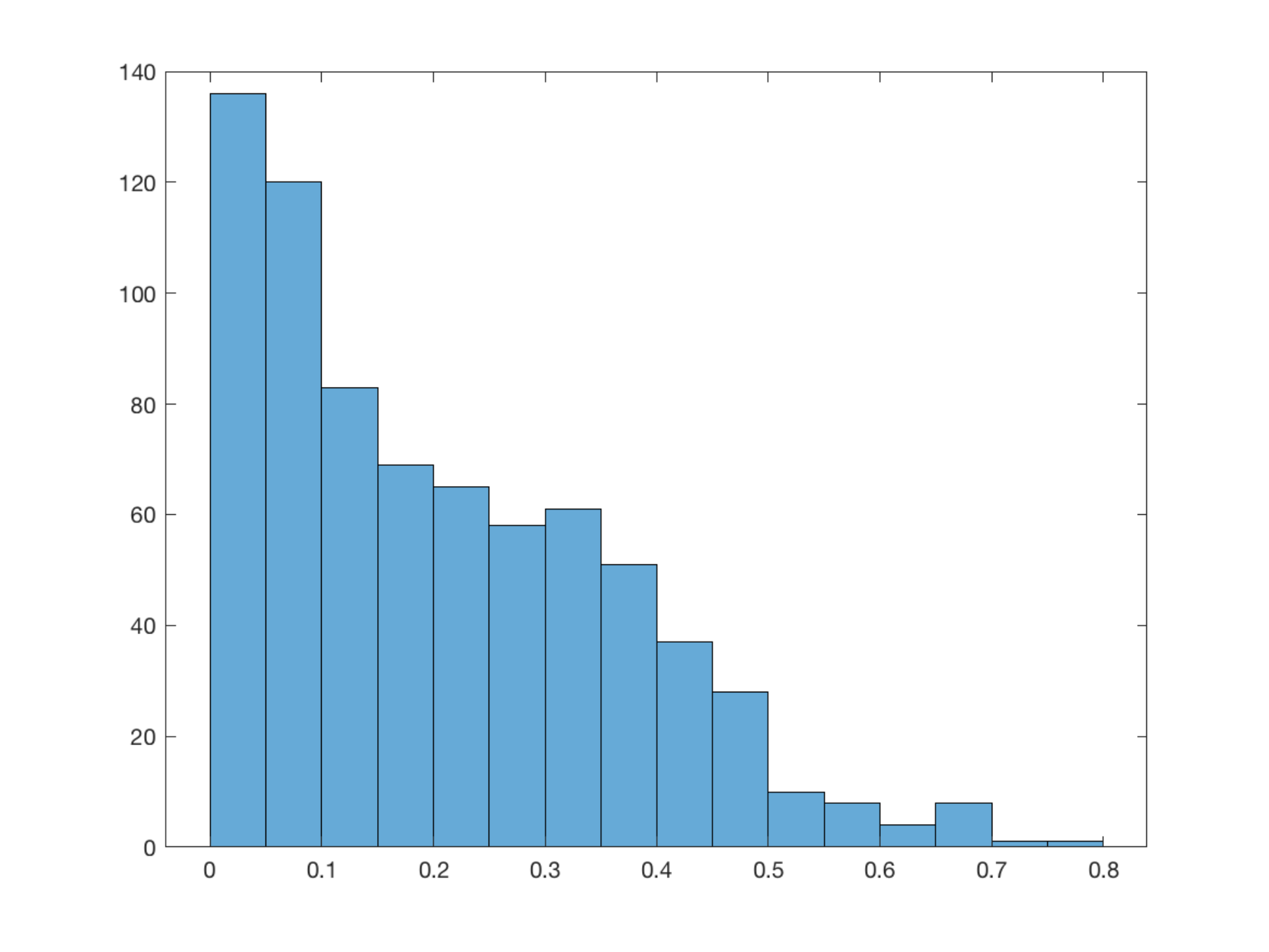}}
\hspace{0.01mm}
\subfloat[$\ell_2$-NC]{\includegraphics[width=0.39\textwidth]{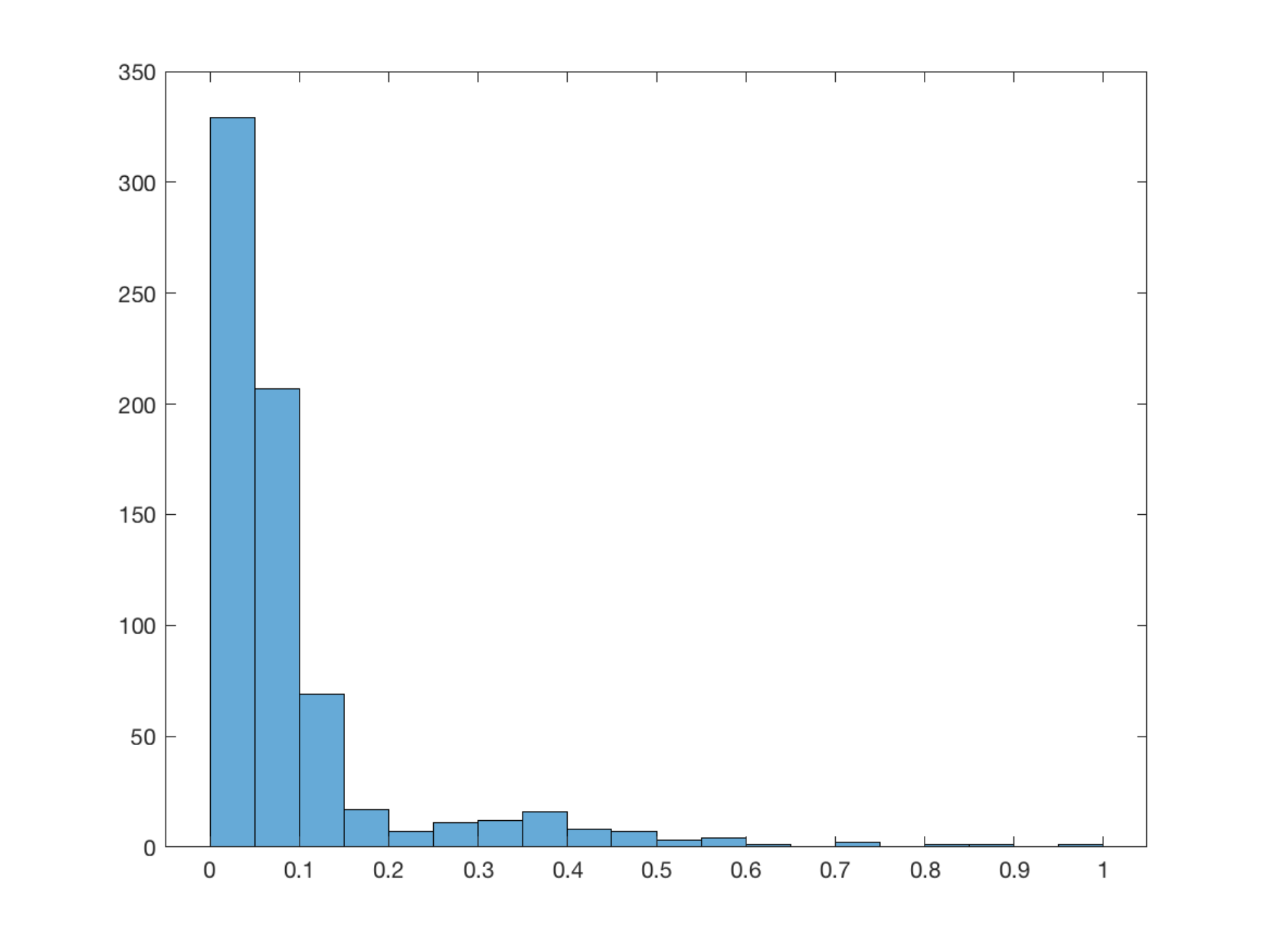}}
\hspace{0.01mm}
\subfloat[KL-$\ell_1$]{\includegraphics[width=0.39\textwidth]{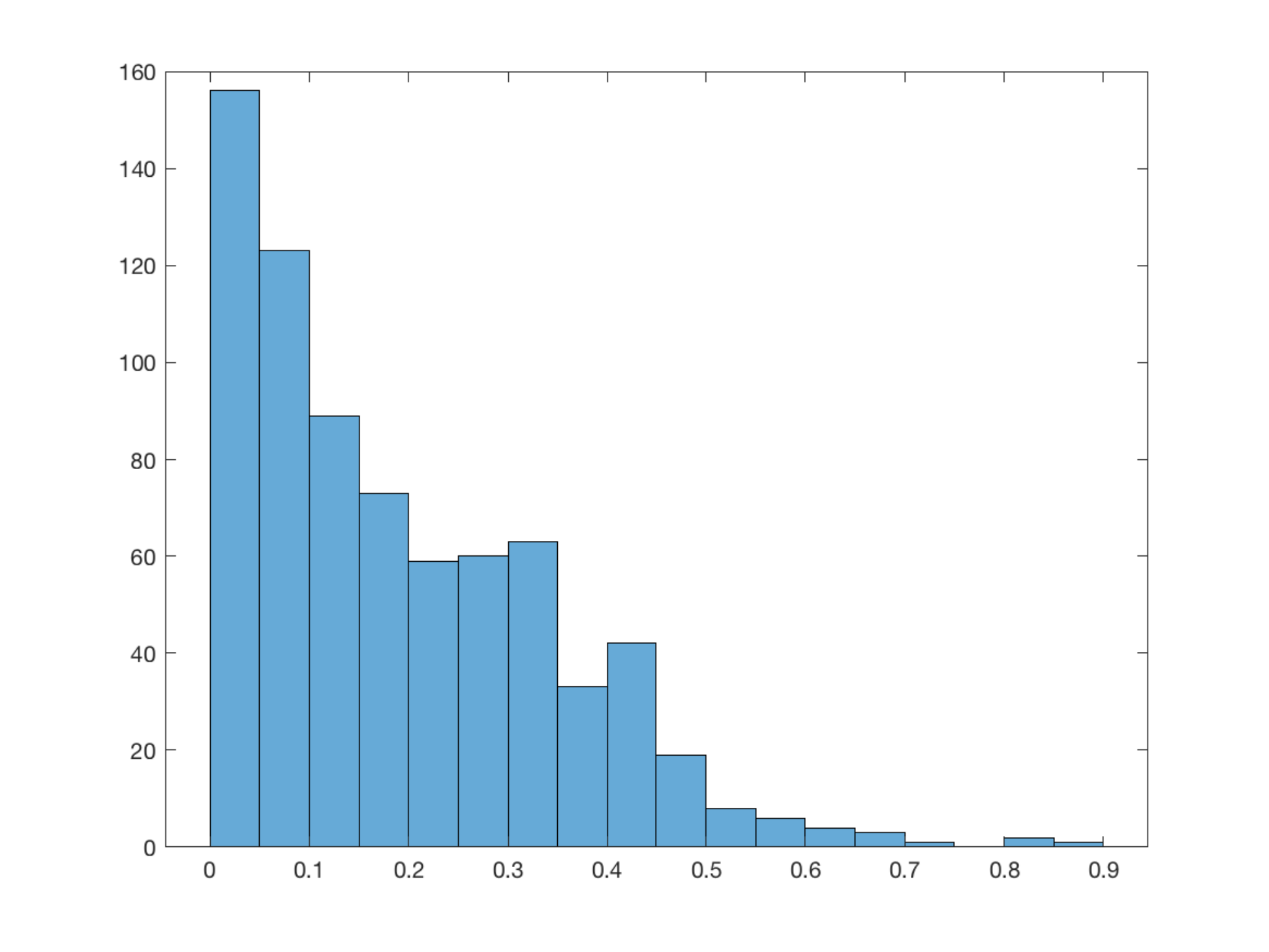}}
\hspace{0.01mm}
\subfloat[KL-NC]{\includegraphics[width=0.39\textwidth]{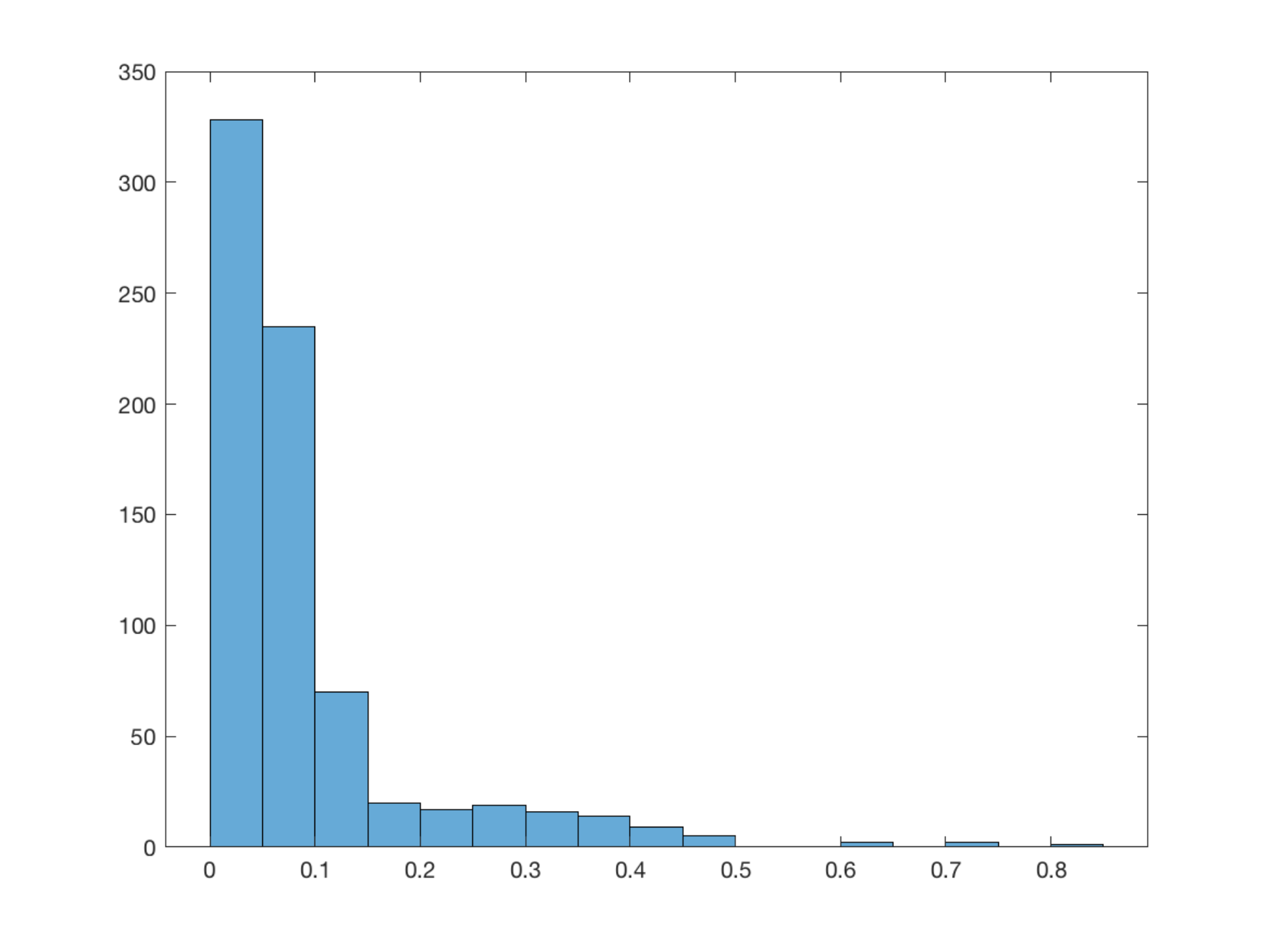}}
\caption{Histogram of the relative errors of the flux values for the 15 point sources case. }
\label{fig:15flux_per}
\end{figure}
\begin{figure}[htbp]
\centering
\subfloat[$\ell_2$-$\ell_1$]{\includegraphics[width=0.39\textwidth]{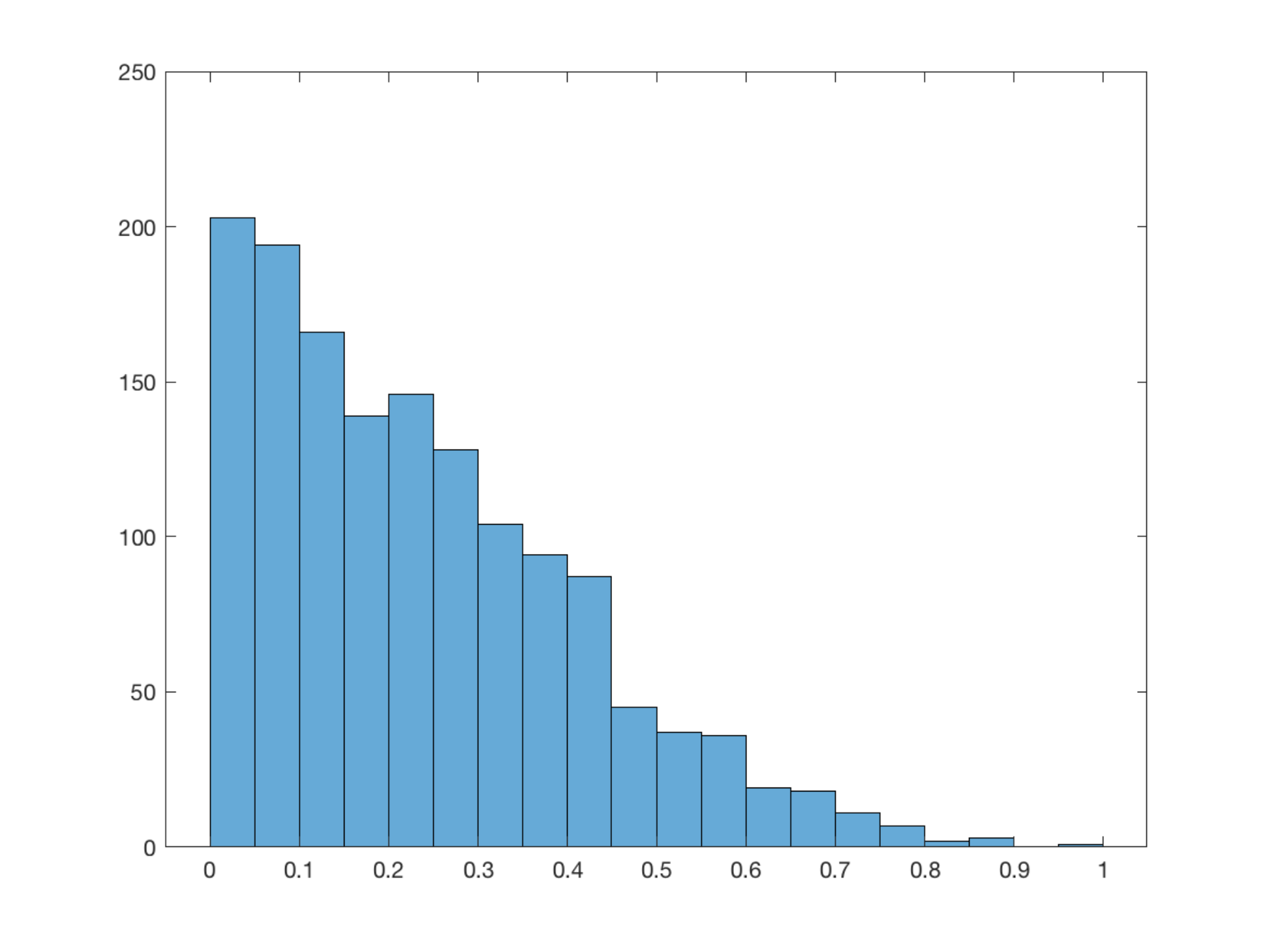}}
\hspace{0.01mm}
\subfloat[$\ell_2$-NC]{\includegraphics[width=0.39\textwidth]{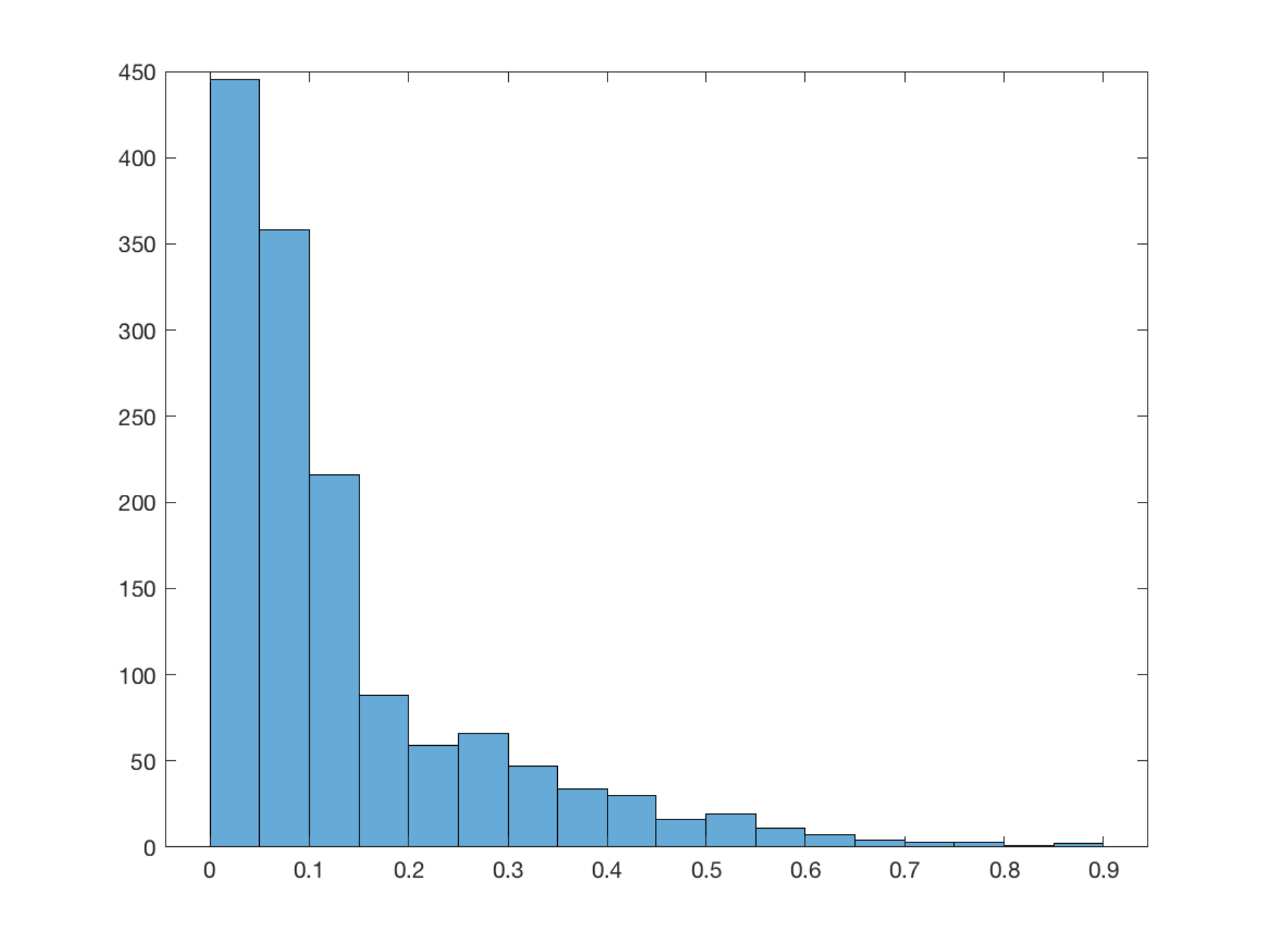}}
\hspace{0.01mm}
\subfloat[KL-$\ell_1$]{\includegraphics[width=0.39\textwidth]{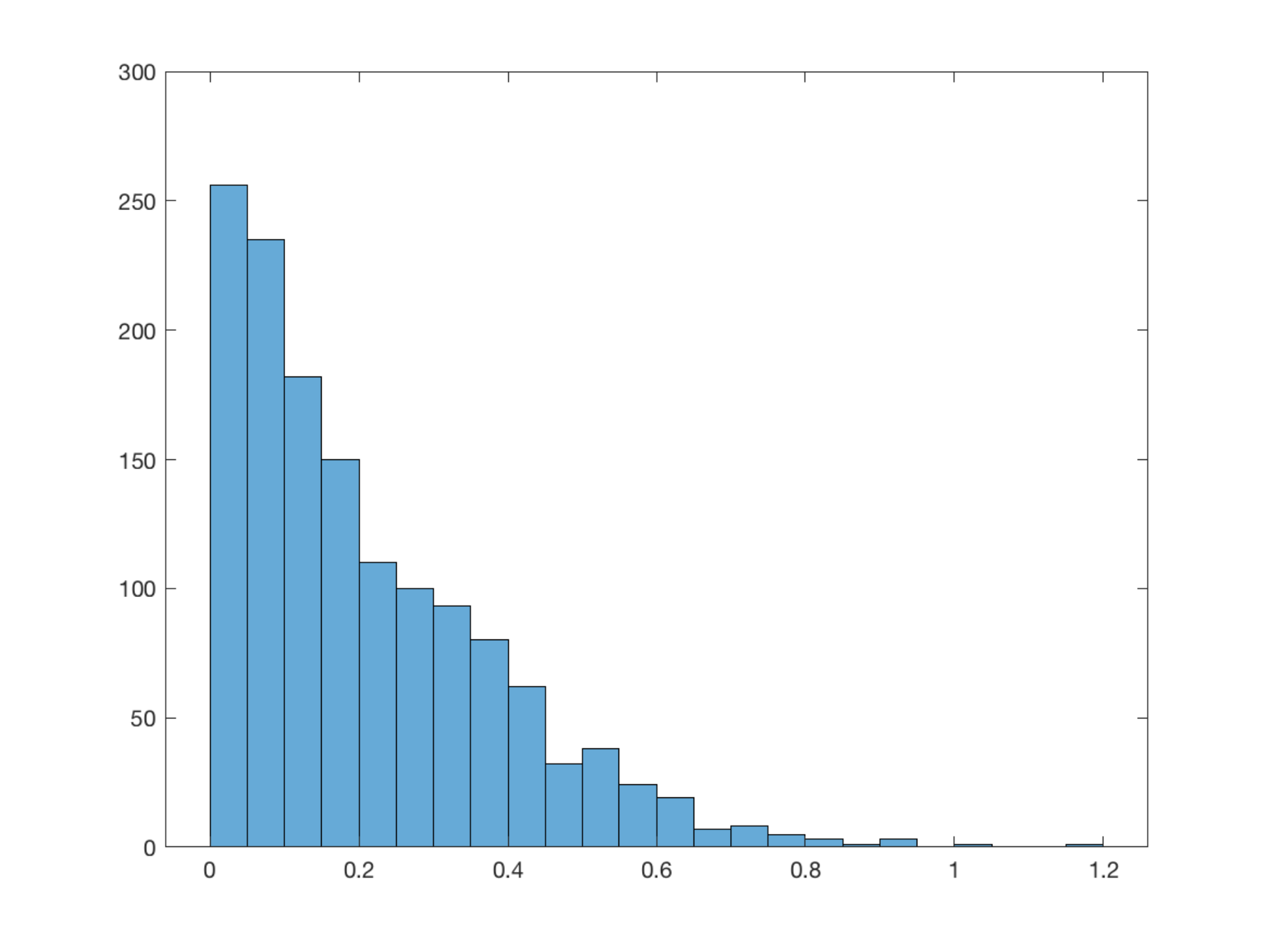}} 
\hspace{0.01mm}
\subfloat[KL-NC]{\includegraphics[width=0.39\textwidth]{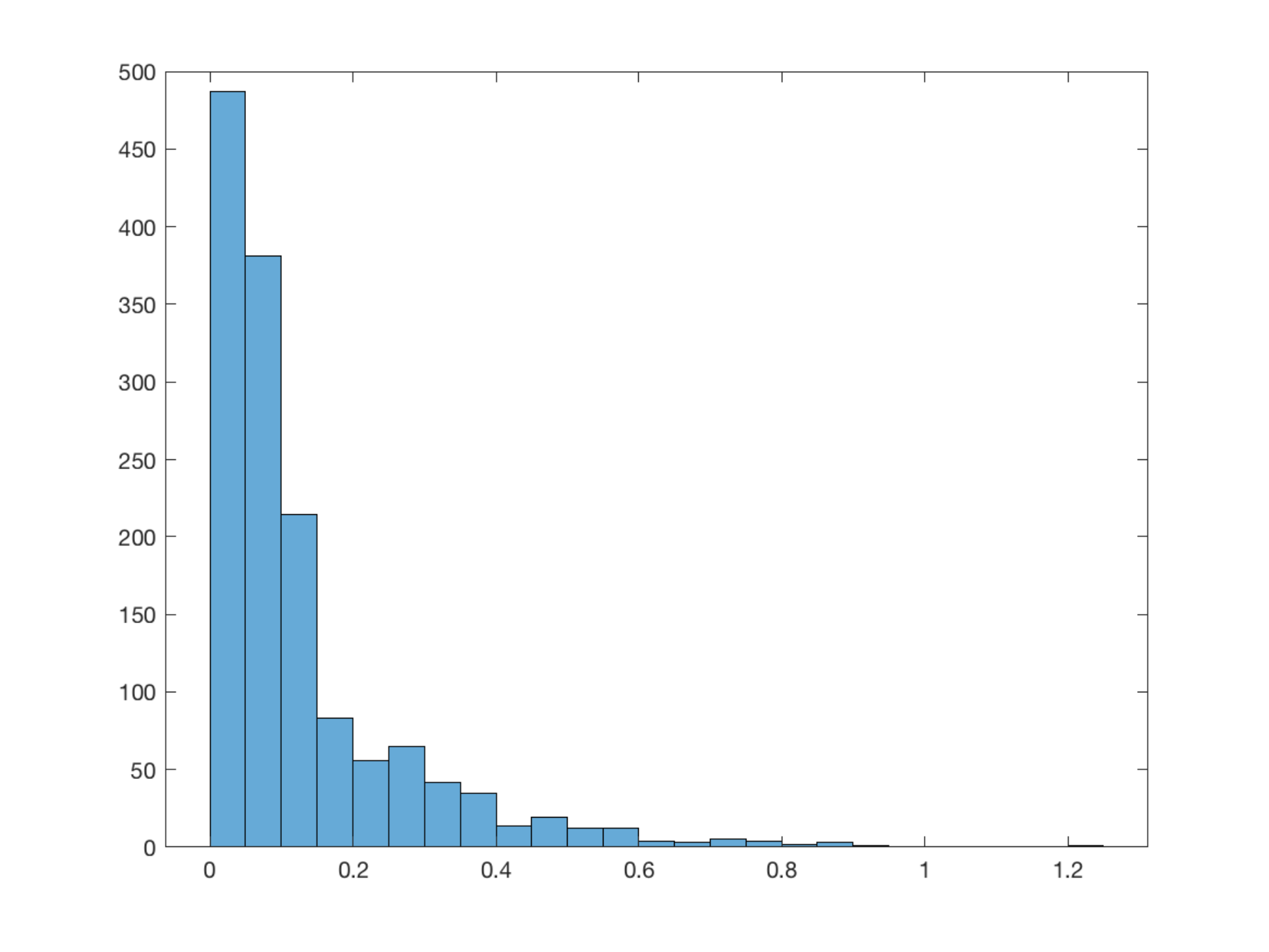}}
\caption{Histogram of relative errors of flux values in the 30 point sources case. } 
\label{fig:30flux_per}
\end{figure}
\subsection{Low number of photons case}
In the previous subsection, we saw the advantages of using our nonconconvex regularization term. However, KL-NC does not seem to possess significant advantages over  the $\ell_2$-NC model, especially in the cases of low source density. This is because only when we have a sufficiently large number of photons,
as was the case in the results discussed earlier, the $\ell_2$ data-fitting term corresponding to
the maximum likelihood estimator for the  Gaussian noise model is well approximated by the KL data-fitting term for the Poisson noise model assumed here. In order to verify this, we study the case of low photon number in this subsection. In \Cref{fig:photons}, we  see the observed image with three different photon numbers per source, namely 500, 1000, and 2000. For the case of 500 photons per source, the image is very dim, and we can barely distinguish the rotating PSF signal from the background noise, so we do not consider this case any further.
For the case of 1000 photons being emitted by each point source, the corresponding image follows a Poisson distribution with a mean of 1000 photons. We take each image to contain 15 point sources. As before, we first randomly generated 20 images, which were used for training parameters. Then we tested 50 different observed images by using the trained parameter values. We obtained {90.00\% for the recall rate and 78.66\%} for the precision rate in the KL-NC algorithm, but only {84.67\% and 59.87\%}, respectively, for the two rates in the $\ell_2$-NC algorithm. For evaluation of the flux estimates, we refer to \Cref{fig:15flux_per1000} from which we see that once again KL-NC gives much better results than $\ell_2$-NC since the bars  corresponding to  the low relative error case are higher. We therefore conclude that the KL data-fitting term  is much better than the $\ell_2$ data-fitting term  when dealing with low numbers of photons.
\begin{figure}[htbp]
\centering
	\resizebox{\textwidth}{0.24\textheight}{\includegraphics{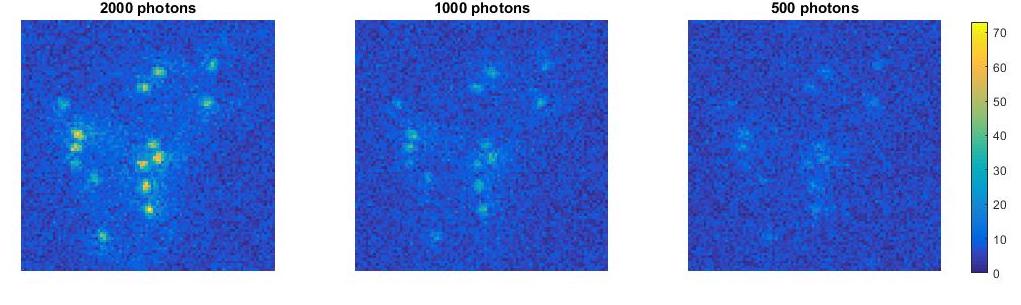}}
\caption{Observed image with different numbers of photons in each point source.  }\label{fig:photons}
\end{figure}
\begin{figure}[h!]
\centering
\subfloat[$\ell_2$-$\ell_1$]{\includegraphics[width=0.39\textwidth]{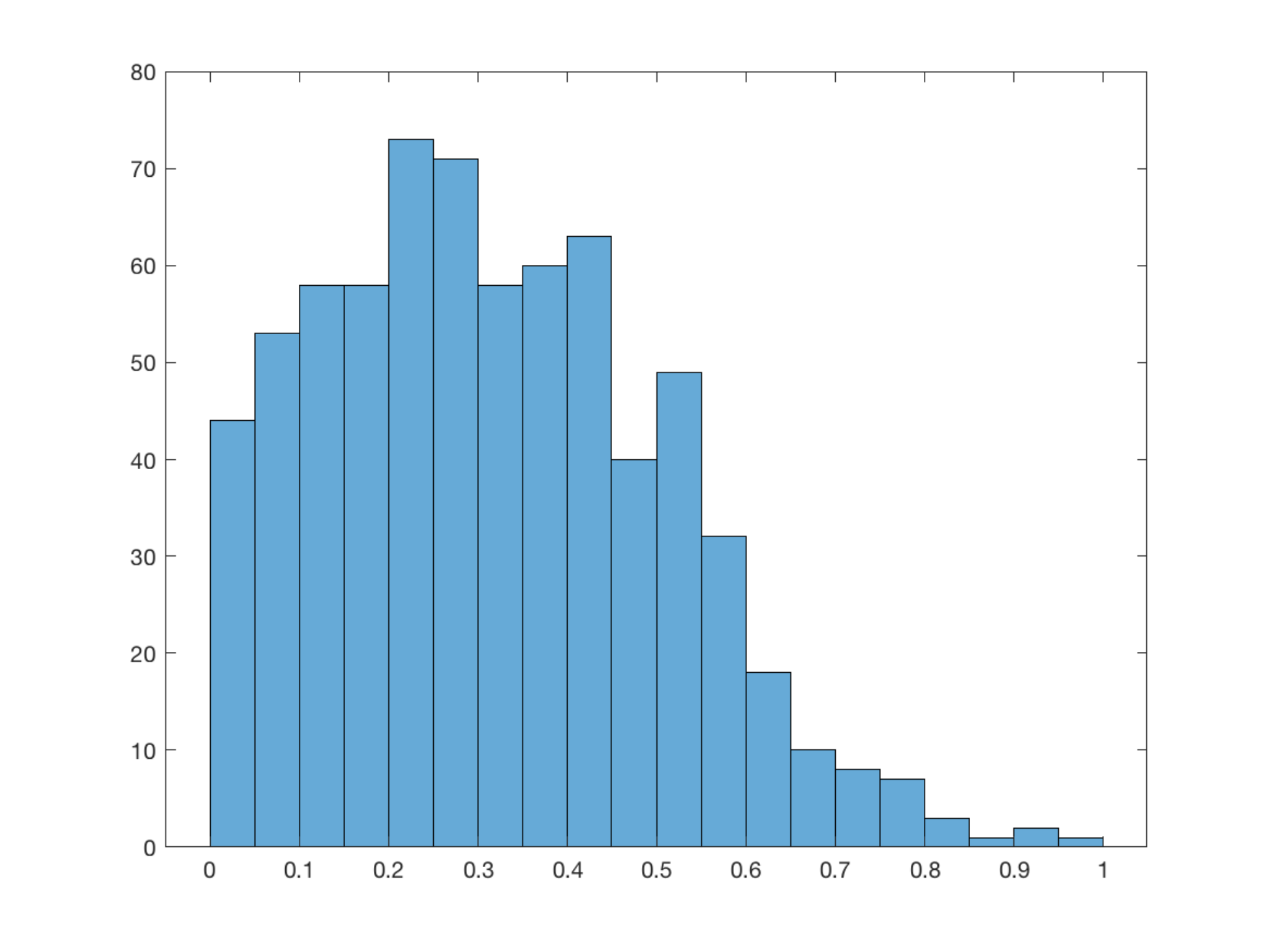}}
\hspace{0.01mm}
\subfloat[$\ell_2$-NC]{\includegraphics[width=0.39\textwidth]{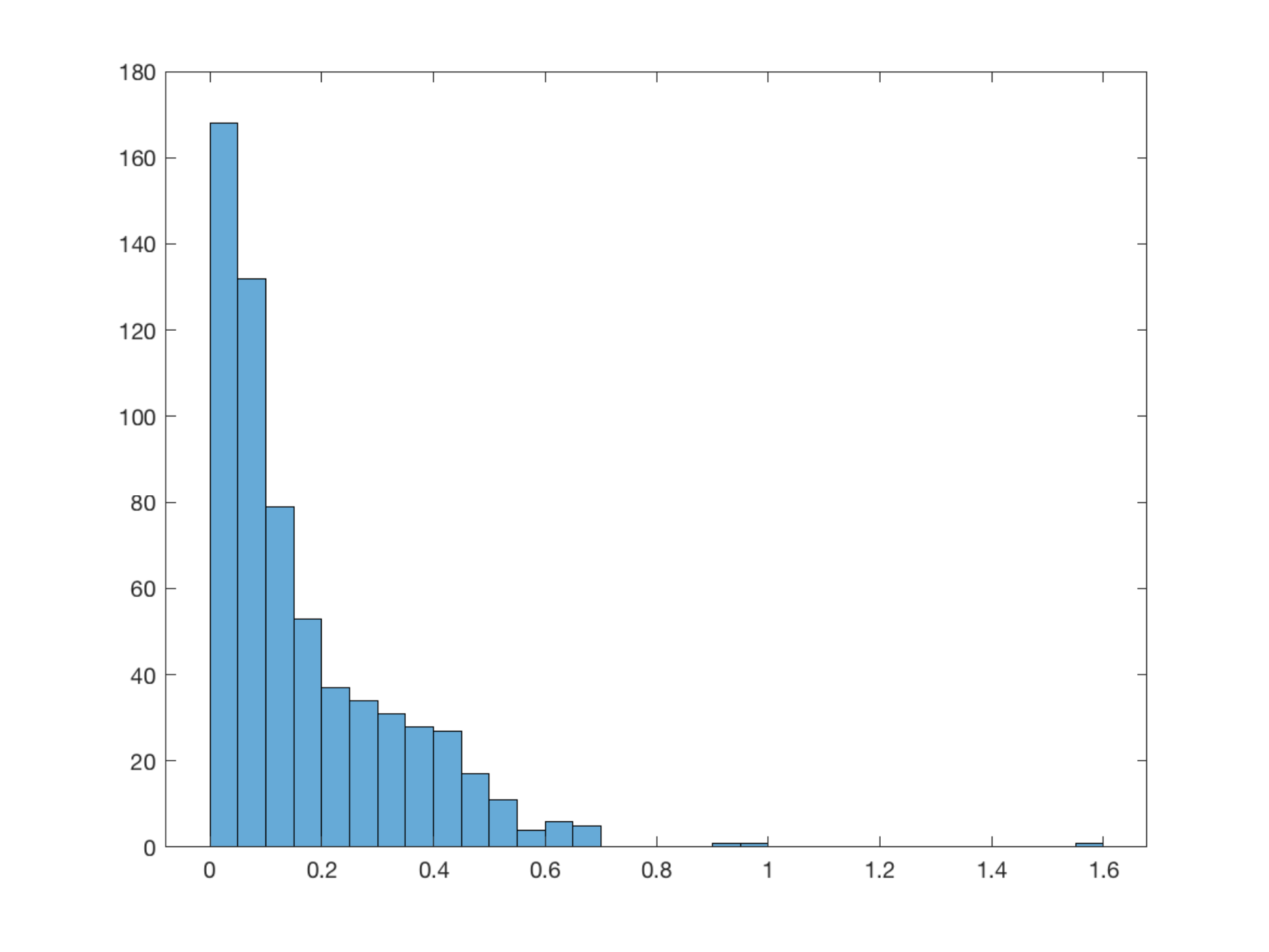}}
\hspace{0.01mm}
\subfloat[KL-$\ell_1$]{\includegraphics[width=0.39\textwidth]{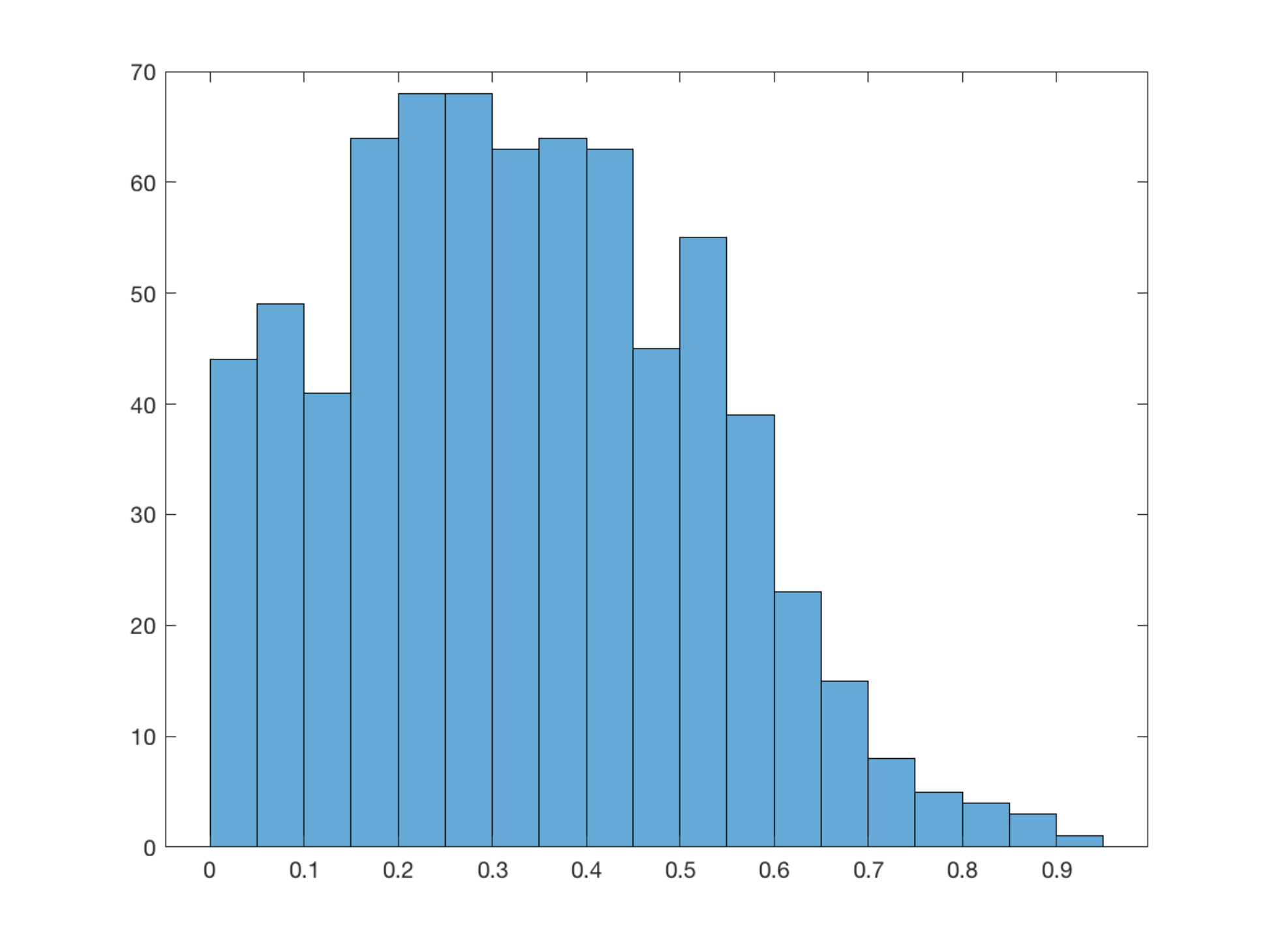}}
\hspace{0.01mm}
\subfloat[KL-NC]{\includegraphics[width=0.39\textwidth]{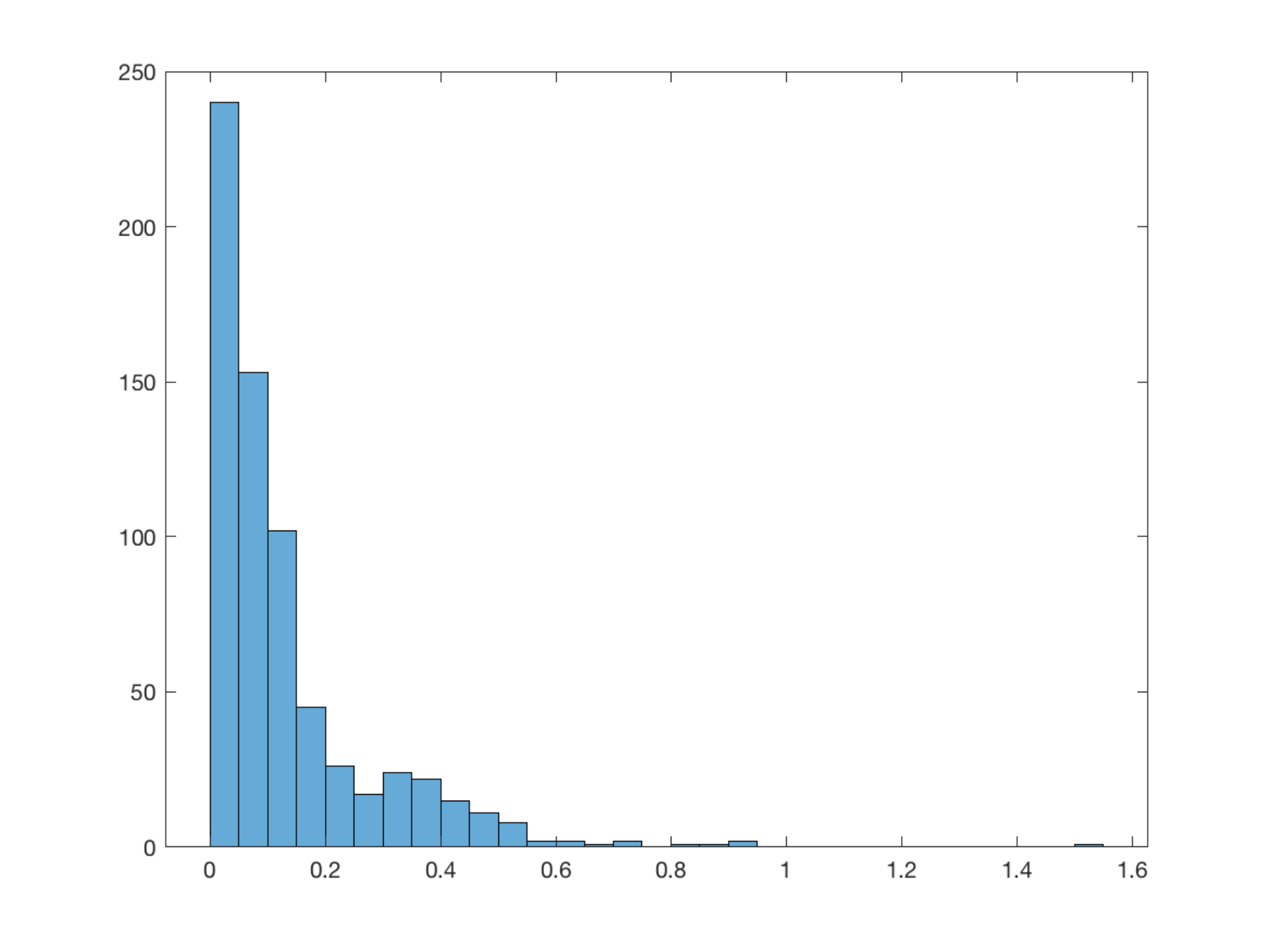}}
\caption{Consider the 1000 photons per point source case. Histogram of the relative errors of the fluxes for the 15 point sources.  }
\label{fig:15flux_per1000}
\end{figure}
\vspace{-0.3cm}
\section{Conclusions  and Current Directions} \label{sec:Conclusions}
In this paper,  we have proposed a non-convex optimization algorithm for 3D localization of a swarm of randomly spaced point sources using a specific rotating PSF which has a single {primary} lobe in the image of each point source. {Rotating PSFs with a single lobe seem to possess several obvious advantages over those that have multiple lobes, such as the double-lobe rotating PSF, e.g. \cite{DH2008pavani,DH2009pavani,Rice2016generalized}, or spread out rapidly with increasing source defocus, such as the standard Airy PSF \cite{Goodman17}, particularly when dealing with relatively high source densities and relatively small photon numbers per source.}  We have employed a post-processing step based both on centroiding the locations of recovered sources that are tightly clustered  and thresholding the recovered flux values  to eliminate obvious false positives from our recovered sources. In addition, we have proposed a new iterative scheme for refining the estimate of the source fluxes after the sources have been localized. 
{In our numerical results, we simulated noisy image data according to the} {physics of the space debris problem at optical wavelengths using a space-based telescope.} {To our knowledge, our algorithm is the first one developed so far for snapshot 3D localization of space debris and tracking of space debris via a rotating PSF approach. 
}

These techniques can be applied to other {rotating and depth-encoding} PSFs for accurate 3D localization and flux recovery of point sources in a scene from its image data under the Poisson noise model.  Applications include not only 3D localization of  space debris, but also super-resolution 3D single-molecule localization microscopy, e.g.  \cite{Book_SR_micro2017,3dsml2017review}. 

{A theoretical analysis of the advantages of our non-convex model over convex models} {and their convergence properties for the rotating PSF problem is planned for study in a future paper. Complete recovery based on the theory of super-resolution is the  direction we will use to further analyze PSF stability and related questions.} Tests of this algorithm based on real data collected using phase masks fabricated for both applications {will be studied in the future. Work involving snapshot multi-spectral imaging, which will permit accurate  material characterization as well as higher 3D resolution and localization of space micro-debris via a sequence of snapshots, is currently under way.}

\bibliographystyle{siamplain}
\bibliography{ref_rPSF}
\end{document}